\theoremstyle{plain}
\newtheorem{theorem}{Theorem}[section]
\newtheorem{definition}[theorem]{Definition}
\newtheorem{proposition}[theorem]{Proposition}
\newtheorem{corollary}[theorem]{Corollary}
\newtheorem{lemma}[theorem]{Lemma}
\theoremstyle{definition}
\newtheorem{remark}[theorem]{Remark}
\newtheorem{example}[theorem]{Example}
\newcounter{count}
\newcommand{\num}{\stepcounter{count}\the\value{count}}
\newcommand{\ketbra}[2]{\ket{#1}\!\bra{#2}}
\newcommand{\abs}[1]{\left|#1\right|}
\newcommand{\conv}{\operatorname{conv}}
\newcommand{\vspan}{\operatorname{span}}
\newcommand{\Ker}{\operatorname{Ker}}
\newcommand{\Ran}{\operatorname{Ran}}
\newcommand{\Sep}{\operatorname{Sep}}
\newcommand{\am}{\operatorname{am}}
\newcommand{\gm}{\operatorname{gm}}
\newcommand{\tint}{\operatorname{int}}
\newcommand{\cl}{\operatorname{cl}}
\DeclareMathOperator{\id}{id}
\newcommand{\cB}{\mathcal{B}}
\newcommand{\cD}{\mathcal{D}}
\newcommand{\cH}{\mathcal{H}}
\newcommand{\cK}{\mathcal{K}}
\newcommand{\cL}{\mathcal{L}}
\newcommand{\cO}{\mathcal{O}}
\newcommand{\cS}{\mathcal{S}}
\newcommand{\cT}{\mathcal{T}}
\newcommand{\cV}{\mathcal{V}}
\newcommand{\cW}{\mathcal{W}}
\newcommand{\cX}{\mathcal{X}}
\begin{document}

\title[Asymptotic Properties for Markovian Dynamics in Quantum Theory and GPTs]{Asymptotic Properties for Markovian Dynamics in Quantum Theory and General Probabilistic Theories}

\author{Yuuya Yoshida$^1$ and Masahito Hayashi$^{1,2,3}$}

\address{$^1$ Graduate School of Mathematics, Nagoya University, Nagoya, Japan}
\address{$^2$ Shenzhen Institute for Quantum Science and Engineering, Southern University
of Science and Technology, Nanshan District, Shenzhen 518055,
People's Republic of China}
\address{$^3$ Centre for Quantum Technologies, National University of Singapore, 3 Science 
Drive 2, 117542, Singapore}
\eads{\mailto{m17043e@math.nagoya-u.ac.jp}, \mailto{masahito@math.nagoya-u.ac.jp}}
\vspace{10pt}
\begin{indented}
\item[]June 2019
\end{indented}

\begin{abstract}
We address \textit{asymptotic decoupling} in the context of Markovian quantum dynamics. 
Asymptotic decoupling is an asymptotic property on a bipartite quantum system, 
and asserts that the correlation between two quantum systems is broken after a sufficiently long time passes. 
The first goal of this paper is to show that asymptotic decoupling is equivalent to \textit{local mixing} 
which asserts the convergence to a unique stationary state on at least one quantum system. 
In the study of Markovian dynamics, 
mixing and ergodicity are fundamental properties 
which assert the convergence and the convergence of the long-time average, respectively. 
The second goal of this paper is to show that 
mixing for dynamics is equivalent to ergodicity for the two-fold tensor product of dynamics. 
This equivalence gives us a criterion of mixing that is a system of linear equations. 
All results in this paper are proved in the framework of general probabilistic theories (GPTs), 
but we also summarize them in quantum theory.
\end{abstract}

\vspace{2pc}
\noindent{\it Keywords}: 
Markovian dynamics,
asymptotic decoupling,
mixing,
ergodicity,
$C_0$ semigroup,
tensor product,
quantum theory,
general probabilistic theories


\section{Introduction}\label{introduction}
\textit{Decoupling} asserts that 
a quantum channel breaks the correlation between two quantum systems, 
and attracts attention 
in open quantum systems \cite{VKL} and quantum information \cite{MBDRC,Dupuis,DBWR}. 
For example, Dupuis et al.\ \cite{DBWR} proved a decoupling theorem 
and clarified a relation between the accuracy of decoupling and conditional entropies. 
Since the correlation between two quantum systems should be small 
for instance in evaluating information leakage \cite{Nielsen-Chuang,Hayashi}, 
it is an interesting topic to investigate decoupling in the context of quantum dynamics. 
In this paper, 
we introduce \textit{asymptotic decoupling} in the context of Markovian quantum dynamics, 
and clarify a necessary and sufficient condition of asymptotic decoupling. 
By our necessary and sufficient condition, 
asymptotic decoupling is closely related to another fundamental property 
in the study of Markovian dynamics.

Markovian dynamics has been often discussed 
in the context of statistical mechanics \cite{Streater,BP,BF}. 
In such studies, it is important how an initial state changes after a sufficiently long time passes. 
In particular, the convergence (called \textit{mixing}) 
and the convergence of the long-time average (called \textit{ergodicity}) interest many researchers \cite{BG,Burgarth}. 
More precisely, mixing asserts that, for any initial state, 
the state at discrete time $n$ (continuous time $t$) converges to a unique stationary state 
as $n\to\infty$ ($t\to\infty$); 
ergodicity asserts that, for any initial state, 
the long-time average converges to a unique stationary state. 
Hence mixing and ergodicity have been discussed in the context of 
relaxation to thermal equilibrium \cite{Streater,BP,BF}. 
However, beyond the study of relaxation processes, 
these properties have found important applications in several fields of quantum information theory. 
In particular, in quantum control \cite{WBKH}, quantum estimation \cite{Guta}, 
quantum communication \cite{GB06}, 
and the study of efficient tensorial representations 
of critical many-body quantum systems \cite{FNW}. 
Ergodicity can be checked by solving a system of linear equations \cite{BG,Burgarth}, 
but a well-known criterion of mixing requires us 
to solve an eigenequation \cite{BG}. 
Thus mixing is clearly different from ergodicity 
with respect to their definitions and numerical checks. 
In this paper, we show that mixing for dynamics is equivalent to 
ergodicity for the two-fold tensor product of dynamics. 
It enables us to check mixing by solving a system of linear equations. 
Furthermore, asymptotic decoupling is equivalent to \textit{local mixing} 
which asserts the convergence to a unique stationary state on at least one quantum system.

As a simple application of the above result, 
we give a relation between \textit{irreducibility} and \textit{primitivity} 
which are important properties in Perron-Frobenius theory. 
Irreducibility and primitivity guarantee the existence of Perron-Frobenius eigenvalues, 
which play an important roll 
in analyzing hidden Markovian processes \cite{W-H,HY}. 
For example, Perron-Frobenius eigenvalues characterize 
the asymptotic performance of the average of observed values 
in a classical hidden Markovian process: 
the central limit theorem, large deviation, and moderate deviation \cite{W-H}. 
The same method can be applied to a hidden Markovian process 
with a quantum hidden system \cite{HY}. 
For the above importance, we address irreducibility and primitivity. 
Since irreducibility and primitivity are close to ergodicity and mixing respectively, 
irreducibility for dynamics is equivalent to primitivity for the two-fold tensor product of dynamics. 
Using this equivalence, we obtain many conditions equivalent to primitivity, 
since there are many conditions equivalent to irreducibility.

In general, quantum dynamics is composed of quantum channels 
given as trace-preserving and completely positive linear maps (TP-CP maps). 
However, in order to handle stochastic operation including 
stochastic local operation and classical communication (SLOCC), 
we need to address quantum channels without trace-preserving. 
Fortunately, most results in this paper do not require trace-preserving. 
Also, complete positivity is not necessarily required and positivity is required. 
Therefore, we show our results for positive maps mainly.

So far, we have stated the asymptotic properties in the context of Markovian classical/quantum dynamics, 
but our results hold in the framework of \textit{general probabilistic theories (GPTs)}. 
GPTs are a general framework including quantum theory and classical probability theory \cite{Short-Wehner,Janotta-Hinrichsen,Lami}. 
Although quantum theory is widely accepted in current physics, 
more general requirements based on states and measurements only imply GPTs 
and do not imply quantum theory uniquely. 
Hence some researchers in physics study GPTs to explore conditions characterizing quantum theory. 
Since our results hold in GPTs, 
the asymptotic properties for Markovian dynamics do not require the framework of quantum theory 
and are common properties in GPTs. 
GPTs enable us to be easily address Markovian quantum dynamics composed of (not necessarily CP) positive maps. 
Therefore, our results are proved in the framework of GPTs 
after the quantum version of our results is stated in Section~\ref{summary}. 
Furthermore, due to the generality of our setting, 
our result can be used to investigate the structure of linear maps with the invariance of a cone, 
which is studied in Perron-Frobenius theory.

The remaining is organized as follows. 
Section~\ref{summary} is a summary of results in Section~\ref{asym} in the quantum case. 
As preparation for later discussion, 
Section~\ref{GPTs} describes the framework of GPTs. 
Section~\ref{d-maps} characterizes dynamical maps in the framework of GPTs. 
In this framework, Sections~\ref{ad}--\ref{mix} state our results 
on asymptotic properties for Markovian dynamics with discrete-time, namely, 
asymptotic decoupling, mixing, and ergodicity. 
Section~\ref{continuous} proceeds to the continuous case, 
and show the same results as the discrete case 
by using results of discrete-time evolution. 
Section~\ref{PF} gives a simple application of a result in Section~\ref{mix} 
to Perron-Frobenius theory. 
Section~\ref{conclusion} is our conclusion.

\section{Summary of our results in quantum theory}\label{summary}
We summarize results in Section~\ref{asym} in the quantum case. 
First, we remark on quantum channels which compose Markovian quantum dynamics. 
In quantum theory, quantum channels are given as TP-CP maps 
on the set $\cT(\cH)$ of all Hermitian matrices on a finite-dimensional quantum system $\cH$. 
The set $\cT(\cH)$ can be regarded as a finite-dimensional real Hilbert space 
equipped with the Hilbert-Schmidt inner product $\braket{X,Y}=\Tr XY$. 
In particular, quantum channels are given as linear maps on $\cT(\cH)$. 
Most results in this paper only require linearity and positivity, 
but for simplicity we focus on only TP-CP maps in this section.

Next, let us introduce some notations on a bipartite quantum system $\cH_1\otimes\cH_2$. 
Throughout this paper, we use \textit{tilde} to express to be bipartite. 
For instance, a bipartite quantum state is denoted by $\tilde{\rho}$, 
and a bipartite quantum channel is denoted by $\tilde{\Gamma}$. 
For a number $i\in\{1,2\}$ and a quantum state $\tilde{\rho}$ on $\cH_1\otimes\cH_2$, 
the reduced state on $\cH_i$ of $\tilde{\rho}$ is denoted by $\pi_i(\tilde{\rho})$. 
That is, by using the partial traces $\Tr_1$ and $\Tr_2$, we have 
$\pi_1(\tilde{\rho})=\Tr_2 \tilde{\rho}$ and $\pi_2(\tilde{\rho})=\Tr_1 \tilde{\rho}$.

\subsection{Discrete-time evolution}\label{q-discrete}
As Figure~\ref{F01}, let us consider dynamics 
when a quantum channel $\Gamma$ is applied to an initial quantum state $\rho$ many times. 
Then the dynamics is Markovian and discrete-time evolution. 
Now, in the bipartite case, 
we are interested in the asymptotic behavior of 
the $n$-th state $\tilde{\Gamma}^n(\tilde{\rho})$ as $n\to\infty$, 
and especially interested in whether the correlation between two quantum systems vanishes asymptotically. 
If the correlation vanishes asymptotically, 
we say that $\tilde{\Gamma}$ is asymptotically decoupling. 
Mathematically, asymptotic decoupling is defined as follows: 

\begin{figure}[t]
	\centering
	\includegraphics[scale=0.6]{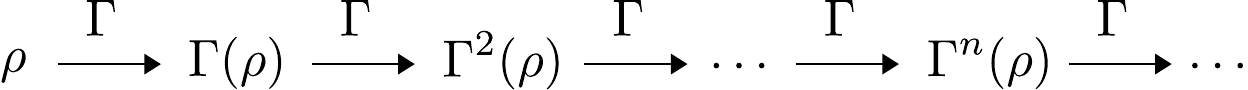}
	\caption{Markovian quantum dynamics with discrete-time.}
	\label{F01}
\end{figure}

\begin{definition}[Asymptotic decoupling]\label{def:q-ad}
	A quantum channel $\tilde{\Gamma}$ is asymptotically decoupling 
	if any state $\tilde{\rho}$ satisfies 
	\begin{equation*}
		\tilde{\Gamma}^n(\tilde{\rho})
		= \pi_1(\tilde{\Gamma}^n(\tilde{\rho}))\otimes\pi_2(\tilde{\Gamma}^n(\tilde{\rho})) + o(1)
		\quad(n\to\infty).
	\end{equation*}
\end{definition}

Since the above right-hand side is a product state, 
it means to be no correlation. 
Next, to clarify a necessary and sufficient condition of asymptotic decoupling, 
we introduce another asymptotic property, namely, mixing: 

\begin{definition}[Mixing]\label{def:q-mix}
	A quantum channel $\Gamma$ is mixing if 
	there exists a state $\rho_0$ 
	such that any state $\rho$ satisfies 
	\begin{equation}
		\lim_{n\to\infty} \Gamma^n(\rho) = \rho_0.
		\label{eq:mix}
	\end{equation}
	The state $\rho_0$ is called the stationary state.
\end{definition}

By using the above term, 
let us give a necessary and sufficient condition of asymptotic decoupling. 
For simplicity, first, we give it 
for a tensor product quantum channel $\Gamma_1\otimes\Gamma_2$: 

\begin{theorem}\label{thm:q-adD}
	For any two quantum channels $\Gamma_1$ and $\Gamma_2$, 
	the following conditions are equivalent.
	\begin{enumerate}
		\item\label{q-adD1}
		$\Gamma_1\otimes\Gamma_2$ is asymptotically decoupling.
		\item\label{q-adD2}
		$\Gamma_1$ or $\Gamma_2$ at least one is mixing.
	\end{enumerate}
\end{theorem}

Condition~\eqref{q-adD2} can be represented as a simple phrase \textit{local mixing}. 
Hence, simply speaking, asymptotic decoupling is equivalent to local mixing. 
Since spectral criterion \cite[Theorem~7]{BG} is known as a criterion of mixing, 
the computation of eigenequations determines whether a tensor product quantum channel is asymptotically decoupling or not. 
These are why Theorem~\ref{thm:q-adD} is simple and meaningful. 
Moreover, the above equivalence also holds for a general quantum channel: 

\begin{theorem}\label{thm:q-ad'D}
	For any quantum channel $\tilde{\Gamma}$, 
	the following conditions are equivalent.
	\begin{enumerate}
		\item\label{q-ad'D1}
		$\tilde{\Gamma}$ is asymptotically decoupling.
		\item\label{q-ad'D2}
		There exist a number $i_1\in\{1,2\}$ and a state $\rho_{0,i_1}$ on $\cH_{i_1}$ 
		such that any state $\tilde{\rho}$ satisfies 
		\[
		\tilde{\Gamma}^n(\tilde{\rho})
		= \rho_{0,i_1}\otimes\pi_{i_2}(\tilde{\Gamma}^n(\tilde{\rho})) + o(1)
		\quad (n\to\infty),
		\]
		where $i_2$ is an element of $\{1,2\}$ except for $i_1$.
	\end{enumerate}
\end{theorem}

Condition~\eqref{q-ad'D2} can be explicitly written as 
\begin{description}
	\item[Case $(i_1,i_2)=(1,2)$\hspace{1em}]
	$\tilde{\Gamma}^n(\tilde{\rho})
	= \rho_{0,1}\otimes\pi_2(\tilde{\Gamma}^n(\tilde{\rho})) + o(1)$,
	\item[Case $(i_1,i_2)=(2,1)$\hspace{1em}]
	$\tilde{\Gamma}^n(\tilde{\rho})
	= \pi_1(\tilde{\Gamma}^n(\tilde{\rho}))\otimes\rho_{0,2} + o(1)$.
\end{description}
Since the state $\rho_{0,1}$ or $\rho_{0,2}$ is something like a stationary state, 
condition~\eqref{q-ad'D2} can also be regarded as local mixing. 
Hence, for a general quantum channel, the same simple equivalence also holds: 
asymptotic decoupling is equivalent to local mixing.

For the above equivalence, 
it is meaningful to investigate a necessary and sufficient condition of mixing. 
Of course, as already mentioned, 
spectral criterion \cite[Theorem~7]{BG} is known as a criterion of mixing, 
but it requires us to solve an eigenequation. 
Hence, let us give another criterion of mixing 
that is a system of linear equations. 
For this purpose, 
we introduce another asymptotic property, namely, ergodicity: 

\begin{definition}[Ergodicity]\label{def:q-erg}
	A quantum channel $\Gamma$ is ergodic 
	if there exists a state $\rho_0$ such that any state $\rho$ satisfies 
	\begin{equation}
		\lim_{t\to\infty} \frac{1}{n}\sum_{k=0}^{n-1} \Gamma^k(\rho) = \rho_0.
		\label{eq:erg}
	\end{equation}
	The state $\rho_0$ is called the stationary state.
\end{definition}

Since the right-hand side of \eqref{eq:erg} is the Ces\`aro mean of the right-hand side of \eqref{eq:mix}, 
mixing implies ergodicity, but the converse does not necessarily hold. 
Definition~\ref{def:q-erg} cannot be checked directly by computing, 
but its numerical check is realized by the following proposition: 

\begin{proposition}\label{prop:q-ergD}
	For any quantum channel $\Gamma$, 
	the following conditions are equivalent.
	\begin{enumerate}
		\item\label{q-ergD1}
		$\Gamma$ is ergodic.
		\item\label{q-ergD2}
		$\dim\Ker(\Gamma - \id_{\cT(\cH)})=1$.
	\end{enumerate}
	The symbol $\id_{\cT(\cH)}$ denotes the identity map on $\cT(\cH)$.
\end{proposition}

Actually, as proved in Section~\ref{mix}, 
a quantum channel $\Gamma$ is mixing if and only if $\Gamma^{\otimes2}$ is ergodic. 
This equivalence and Proposition~\ref{prop:q-ergD} imply the following theorem 
which achieves the second goal of this paper: 

\begin{theorem}\label{thm:q-mixD}
	For any quantum channel $\Gamma$, 
	the following conditions are equivalent.
	\begin{enumerate}
		\item\label{q-mixD1}
		$\Gamma$ is mixing.
		\item\label{q-mixD2}
		$\Gamma^{\otimes2}$ is mixing.
		\item\label{q-mixD3}
		$\Gamma^{\otimes2}$ is ergodic.
		\item\label{q-mixD4}
		$\dim\Ker(\Gamma^{\otimes2} - \id_{\cT(\cH^{\otimes2})})=1$.
	\end{enumerate}
\end{theorem}

Proposition~\ref{prop:q-ergD} is well-known 
(for instance, see \cite[appendix]{BG}, \cite[Corollary~2]{Burgarth}), 
but Theorem~\ref{thm:q-mixD} is not fully known as far as we know, 
and only partial results are published. 
For instance, the equivalence of conditions~\eqref{q-mixD1} and \eqref{q-mixD3} in Theorem~\ref{thm:q-mixD} 
was proved for unital quantum channels in finite-dimensions \cite[Theorem~2.10]{JR}. 
An equivalence of conditions similar to conditions~\eqref{q-mixD1} and \eqref{q-mixD3} 
is known for unital normal CP maps in infinite-dimensions \cite[Theorem~6.3]{Watanabe}. 
However, only a few preceding studies considered tensor product channels in the first place \cite{JR,Watanabe, Luczak}. 
Their proofs are of operator algebra, 
but our proof is of linear algebra, 
and thus Theorem~\ref{thm:q-mixD} also holds in GPTs. 

Finally, summarizing Theorems~\ref{thm:q-adD} and \ref{thm:q-mixD}, 
we have the following theorem immediately: 

\begin{theorem}\label{thm:q-mix-ad}
	Let $\Gamma_1$ and $\Gamma_2$ be quantum channels. 
	If $\Gamma_2$ is not mixing, then the following conditions are equivalent.
	\begin{enumerate}
		\item\label{q-mix-ad1}
		$\Gamma_1$ is mixing.
		\item\label{q-mix-ad2}
		$\Gamma_1^{\otimes2}$ is mixing.
		\item\label{q-mix-ad3}
		$\Gamma_1^{\otimes2}$ is ergodic.
		\item\label{q-mix-ad4}
		$\dim\Ker(\Gamma_1^{\otimes2} - \id_{\cT(\cH_1^{\otimes2})})=1$.
		\item\label{q-mix-ad5}
		$\Gamma_1\otimes\Gamma_2$ is asymptotically decoupling.
	\end{enumerate}
\end{theorem}

\subsection{Continuous-time evolution}\label{q-continuous}
Next, we address continuous-time evolution. 
In the discrete case, 
$\Gamma^n(\rho)$ has denoted the state at time $n\in\mathbb{N}$. 
Since we address continuous-time evolution in this subsection, 
we denote by $\Gamma^{(t)}(\rho)$ the state at time $t>0$. 
Moreover, it is natural that the family $\{\Gamma^{(t)}\}_{t>0}$ of quantum channels 
should satisfy 
\begin{itemize}
	\item
	$\Gamma^{(t)} \circ \Gamma^{(s)} = \Gamma^{(t+s)}$ for all $t,s>0$,
	\item
	$\Gamma^{(t)} \to \id_{\cT(\cH)}$ as $t\downarrow0$.
\end{itemize}
The above first condition is illustrated by Figure~\ref{F02}. 
It asserts that the state at time $t+s$ equals 
the state after a time $s$ passes from the state at time $t$. 
A family $\{\Gamma^{(t)}\}_{t>0}$ satisfying the above two conditions 
is called a \textit{$C_0$ semigroup}. 
It can be easily checked that 
any $C_0$ semigroup $\{\Gamma^{(t)}\}_{t>0}$ is right-continuous everywhere. 
The definition of a $C_0$ semigroup is simple and intuitive in the finite-dimensional case, 
but the infinite-dimensional case needs a few technical conditions. 
Although there are many studies on $C_0$ semigroups of finite/infinite-dimensions, 
we use no existing results on $C_0$ semigroups and only use the above two conditions. 
In the context of Markovian quantum dynamics, 
$C_0$ semigroups are also called \textit{quantum dynamical semigroups} \cite{Lindblad}.

\begin{figure}[t]
	\centering
	\includegraphics[scale=0.6]{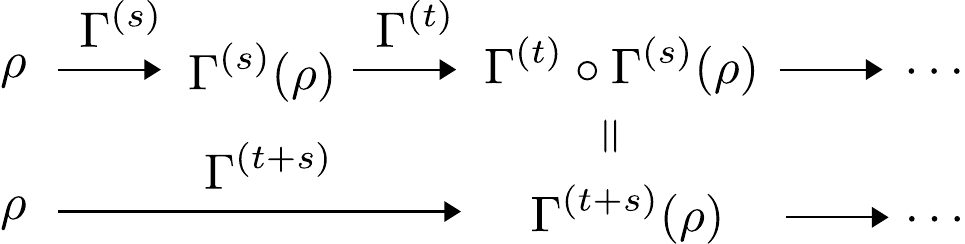}
	\caption{Markovian quantum dynamics with continuous-time.}
	\label{F02}
\end{figure}

In order to state our results in the continuous case, 
we need to define the continuous versions of asymptotic decoupling, mixing, and ergodicity. 
Fortunately, once replacing $\Gamma^n$ and $n\to\infty$ with $\Gamma^{(t)}$ and $t\to\infty$ respectively, 
the continuous versions of asymptotic decoupling and mixing are defined. 
However, to define the continuous version of ergodicity, 
we need to use an integral as follows: 

\begin{definition}[Ergodicity]
	A $C_0$ semigroup $\{\Gamma^{(t)}\}_{t>0}$ is ergodic 
	if there exists a state $\rho_0$ such that any state $\rho$ satisfies 
	\begin{equation*}
		\lim_{t\to\infty} \frac{1}{t}\int_0^t \Gamma^{(s)}(\rho)\,ds = \rho_0.
	\end{equation*}
	The state $\rho_0$ is called the stationary state.
\end{definition}

Any mixing $C_0$ semigroup $\{\Gamma^{(t)}\}_{t>0}$ is ergodic 
in the same way as the discrete case. 
This fact is due to L'Hospital's rule. 
In the above setting, our results are below: 

\begin{theorem}\label{thm:q-c-adD}
	For any two $C_0$ semigroups $\{\Gamma_1^{(t)}\}_{t>0}$ and $\{\Gamma_2^{(t)}\}_{t>0}$, 
	the following conditions are equivalent.
	\begin{enumerate}
		\item\label{q-c-adD1}
		$\{\Gamma_1^{(t)}\otimes\Gamma_2^{(t)}\}_{t>0}$ is asymptotically decoupling.
		\item\label{q-c-adD2}
		$\{\Gamma_1^{(t)}\}_{t>0}$ or $\{\Gamma_2^{(t)}\}_{t>0}$ at least one is mixing.
	\end{enumerate}
\end{theorem}

\begin{theorem}\label{thm:q-c-ad'D}
	For any $C_0$ semigroup $\{\tilde{\Gamma}^{(t)}\}_{t>0}$, 
	the following conditions are equivalent.
	\begin{enumerate}
		\item\label{q-c-ad'D1}
		$\{\tilde{\Gamma}^{(t)}\}_{t>0}$ is asymptotically decoupling.
		\item\label{q-c-ad'D2}
		There exist a number $i_1\in\{1,2\}$ and a state $\rho_{0,i_1}$ on $\cH_{i_1}$ 
		such that any state $\tilde{\rho}$ satisfies 
		\[
		\tilde{\Gamma}^{(t)}(\tilde{\rho})
		= \rho_{0,i_1}\otimes\pi_{i_2}(\tilde{\Gamma}^{(t)}(\tilde{\rho})) + o(1)
		\quad (t\to\infty),
		\]
		where $i_2$ is an element of $\{1,2\}$ except for $i_1$.
	\end{enumerate}
\end{theorem}

\begin{theorem}\label{thm:q-c-mixD}
	For any $C_0$ semigroup $\{\Gamma^{(t)}\}_{t>0}$, 
	the following conditions are equivalent.
	\begin{enumerate}
		\item\label{q-c-mixD1}
		$\{\Gamma^{(t)}\}_{t>0}$ is mixing.
		\item\label{q-c-mixD2}
		$\{(\Gamma^{(t)})^{\otimes2}\}_{t>0}$ is mixing.
		\item\label{q-c-mixD3}
		$\{(\Gamma^{(t)})^{\otimes2}\}_{t>0}$ is ergodic.
		\item\label{q-c-mixD4}
		$\Gamma^{(\epsilon)}$ is mixing for some $\epsilon>0$.
		\item\label{q-c-mixD5}
		$(\Gamma^{(\epsilon)})^{\otimes2}$ is mixing for some $\epsilon>0$.
		\item\label{q-c-mixD6}
		$(\Gamma^{(\epsilon)})^{\otimes2}$ is ergodic for some $\epsilon>0$.
	\end{enumerate}
\end{theorem}

Similarly to the discrete case, 
asymptotic decoupling is equivalent to local mixing, 
and mixing for dynamics is equivalent to ergodicity for the two-fold tensor product of dynamics.

Some readers might consider that 
Theorems~\ref{thm:q-c-adD}, \ref{thm:q-c-ad'D}, and \ref{thm:q-c-mixD} are more general than 
Theorems~\ref{thm:q-adD}, \ref{thm:q-ad'D}, and \ref{thm:q-mixD}. 
However, it is not true due to the following reason. 
Any $C_0$ semigroup $\{\Gamma^{(t)}\}_{t>0}$ can be represented as an exponential function \cite[Theorem~I.3.7]{EN}: 
there exists a linear map $\cL$ such that 
$\Gamma^{(t)} = e^{t\cL}$ for all $t>0$. 
Thus $\Gamma^{(t)}$ does not have the eigenvalue zero. 
However, the discrete case allows that $\Gamma$ has the eigenvalue zero. 
Moreover, 
$\{\Gamma^{(t)}\}_{t>0}$ is a $C_0$ semigroup of CP maps 
if and only if $\{(\Gamma^{(t)})^{\otimes2}\}_{t>0}$ is a $C_0$ semigroup of positive maps \cite[Theorem~1]{BCF}. 
This fact is completely different from the case with CP maps: 
positivity for $\Gamma^{\otimes2}$ does not imply complete positivity for $\Gamma$. 
These are why the continuous case is somewhat restricted. 
The above generator $\cL$ is called a \textit{Lindbladian} 
and was characterized by Lindblad \cite{Lindblad}.

\section{GPTs with cones}\label{GPTs}
On the basis of GPTs, let us define \textit{states} and \textit{measurements}. 
GPTs are a general framework including classical probability theory and quantum theory \cite{Short-Wehner,Janotta-Hinrichsen,Lami}. 
Simply speaking, a GPT consists of a \textit{proper cone} $\cK$ and a \textit{unit effect} $u$. 
This is because the set $\cS(\cK,u)$ of all states is defined by using $\cK$ and $u$. 
Throughout this paper, we consider finite-dimensional GPTs alone. 
Also, we summarize basic lemmas on convex cones used implicitly/explicitly in this paper, in Appendix~\ref{appC}.

First, we provide preliminary knowledge below. 
We denote by $\cV^\ast$ the dual space of a finite-dimensional real vector space $\cV$, 
and assume $\dim\cV\ge2$ throughout this paper. 
The dual space $\cV^\ast$ can be naturally identified with $\cV$ 
by using an inner product $\braket{\cdot,\cdot}$ on $\cV$. 
Hence we identify $\cV^\ast$ with $\cV$. 
A convex set $\cK\subset\cV$ is called a \textit{convex cone} 
if $\cK$ contains any non-negative number multiple of any $x\in\cK$. 
When a convex cone $\cK$ is a closed set, 
we say that $\cK$ is a \textit{closed convex cone} (for short, \textit{cone}). 
For any non-empty convex cone $\cK\subset\cV$, 
the \textit{dual cone} is defined as 
\[
\cK^\ast = \set{y\in\cV | \braket{y,x}\ge0\ (\forall x\in\cK)}.
\]
It is well-known \cite{Boyd,Ekeland} that 
\begin{itemize}
	\item
	$\cK^\ast$ is a cone,
	\item
	$(\cK')^\ast\subset\cK^\ast$ if $\cK\subset\cK'$,
	\item
	$\cK^\ast=\cl(\cK)^\ast$,
	\item
	$\cK^{\ast\ast}=\cl(\cK)$,
\end{itemize}
where $\cl(\cX)$ denotes the closure of a subset $\cX\subset\cV$. 
In particular, the first three properties follow from the definition immediately. 
If $\cK^{\ast}=\cK$, the cone $\cK$ is called \textit{self-dual}. 
The norm on $\cV$ is defined as $\|x\| = \sqrt{\braket{x,x}}$ for all $x\in\cV$.

Figure~\ref{Fig1} is an instance of a cone of $\mathbb{R}^2$ and its dual cone. 
The boundary of the cone in the left figure consists of the two rays $l_1$ and $l_2$. 
On the other hand, the boundary of the dual cone in the right figure consists of the two rays $l_1^\bot$ and $l_2^\bot$. 
Hence, the larger the angle between $l_1$ and $l_2$ is, the smaller the angle between $l_1^\bot$ and $l_2^\bot$ is.

\begin{figure}[t]
	\centering
	\includegraphics[scale=1]{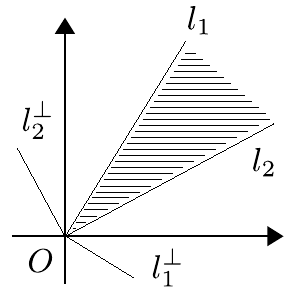}
	\hspace{2ex}
	\includegraphics[scale=1]{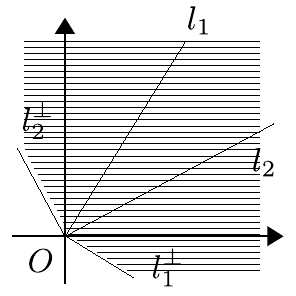}
	\caption{The left and right figures are a cone of $\mathbb{R}^2$ and its dual cone, respectively. 
	For each $i=1,2$, the rays $l_i$ and $l_i^\bot$ are orthogonal to each other at the origin $O$.}\label{Fig1}
\end{figure}

Next, we describe a GPT, which requires the following components. 
Suppose that $\cK\subset\cV$ is a \textit{proper cone}, i.e., 
a cone satisfying $\tint(\cK) \not= \emptyset$ and $\cK\cap(-\cK) = \{ 0 \}$, 
where $\tint(\cX)$ denotes the interior of a subset $\cX\subset\cV$. 
Once we fix a \textit{unit effect} $u\in\tint(\cK^\ast)$, 
the set of all \textit{states} is given as 
\[
\cS(\cK,u) \coloneqq \set{x \in \cK | \braket{u,x}=1}.
\]
The set $\cS(\cK,u)$ is a compact convex set 
(see Lemma~\ref{state-compact} in Appendix~\ref{appC}). 
A \textit{measurement} is given as a decomposition $\{e_i\}_i$ of $u$, namely, 
it satisfies $e_i\in\cK^\ast$ and $\sum_i e_i=u$, where each $i$ corresponds to an outcome. 
When $x$ is a state and a measurement $\{e_i\}_i$ is performed, 
the probability to obtain an outcome $i$ is $\braket{e_i,x}$. 
Therefore, once we fix the tuple $(\cV, \braket{\cdot,\cdot}, \cK, u)$, 
our GPT is established.

Next, we give two typical examples of GPTs. 
Table~\ref{T01} summarizes the tuples $(\cV, \braket{\cdot,\cdot}, \cK, u)$ which appear below.

\begin{example}[Classical probability theory]\label{CPT}
In order to recover classical probability theory with $d$ outcomes, 
put $\cV=\mathbb{R}^d$ and $\cK=[0,\infty)^d$. 
Also, choose the inner product $\braket{\cdot,\cdot}$ to be the standard inner product. 
Hence the relation $\cK^\ast=[0,\infty)^d$ holds, 
and thus $\cK$ is self-dual. 
Furthermore, choosing the unit effect $u=[1,\ldots,1]^\top$, 
we find that states equal probability vectors. 
Thus we obtain classical probability theory with $d$ outcomes.
\end{example}

\begin{example}[Quantum theory]\label{QT}
Next, let us recover quantum theory on a finite-dimensional complex Hilbert space $\cH$. 
Choose $\cV$ to be the set $\cT(\cH)$ of all Hermitian matrices on $\cH$. 
Also, choose $\cK$ to be the set $\cT_+(\cH)$ of all positive semi-definite matrices, 
which has non-empty interior and satisfies $\cT_+(\cH)\cap(-\cT_+(\cH))=\{ 0 \}$. 
Furthermore, define the inner product $\braket{\cdot,\cdot}$ as $\braket{Y,X}=\Tr YX$ for all $X,Y\in\cT(\cH)$. 
Hence the relation $\cK^\ast=\cT_+(\cH)$ holds, 
and thus $\cK$ is self-dual. 
Choosing the unit effect $u$ to be the identity matrix on $\cH$, 
we find that states equal density matrices. 
Finally, we note that measurements are given as positive operator-valued measures (POVMs).
\end{example}

\begin{table}[t]
	\centering
	\renewcommand{\arraystretch}{1.2}
	\caption{The two tuples $(\cV, \braket{\cdot,\cdot}, \cK, u)$ in Examples~\ref{CPT} and \ref{QT}.}
	\label{T01}
	\begin{tabular}{|c|cc|}
		\hline
		   &Vector space $\cV$&Inner product $\braket{\cdot,\cdot}$\\
		\hline\hline
		Classical&$\mathbb{R}^d$&$\sum_{i=1}^d y_i x_i$\\
		Quantum&All Hermitian matrices&$\Tr YX$\\
		\hline\hline
		   &Proper cone $\cK$&Unit effect $u$\\
		\hline\hline
		Classical&$[0,\infty)^d$&$[1,\ldots,1]^\top$\\
		Quantum&All positive semi-definite matrices&Identity matrix\\
		\hline
	\end{tabular}
\end{table}

Now, we return to GPTs and focus on two GPTs given 
by two tuples $(\cV_i, \braket{\cdot,\cdot}, \cK_i, u_i)$ with $i=1,2$. 
The \textit{joint system} is given as the tensor product space $\cV_1\otimes\cV_2$ 
equipped with the natural inner product induced by the inner products of $\cV_1$ and $\cV_2$. 
Then the \textit{joint unit effect} is $u_1\otimes u_2$. 
When two states $x_1\in\cS(\cK_1,u_1)$ and $x_2\in\cS(\cK_2,u_2)$ 
are prepared independently in the respective systems, 
it is natural that the state on the joint system is given as the product state $x_1\otimes x_2$. 
Since any convex combination of product states is also realized by randomization, 
any state $x\in\cS(\cK_1\otimes\cK_2,u_1\otimes u_2)$ can be realized. 
Here we define the tensor product cone $\cK_1\otimes\cK_2$ as 
\[
\cK_1\otimes\cK_2
= \Set{\sum_{i=1}^n x_{i,1}\otimes x_{i,2}\in\cV_1\otimes\cV_2 | 
n\in\mathbb{N},\ x_{i,1}\in\cK_1,\ x_{i,2}\in\cK_2\ (1\le i\le n)}.
\]
Moreover, any $\sum_{i=1}^m x_{i,1}\otimes x_{i,2}\in\cK_1\otimes\cK_2$ 
and any $\sum_{j=1}^n y_{j,1}\otimes y_{j,2}\in\cK_1^\ast\otimes\cK_2^\ast$ satisfy 
\[
\Bigl\langle \sum_{i=1}^m x_{i,1}\otimes x_{i,2},\ \sum_{j=1}^n y_{j,1}\otimes y_{j,2} \Bigr\rangle
= \sum_{i=1}^m \sum_{j=1}^n \braket{x_{i,1}, y_{j,1}}\braket{x_{i,2}, y_{j,2}} \ge 0.
\]
Thus the following proposition holds: 

\begin{proposition}\label{prop:cone}
	If $\cK_1 $ and $\cK_2$ are cones, then 
	\[
	\cK_1^\ast\otimes\cK_2^\ast \subset (\cK_1\otimes\cK_2)^\ast.
	\]
\end{proposition}

Finally, we consider what properties a cone $\tilde{\cK}$ of the joint system should satisfy. 
From the argument in the previous paragraph, the cone $\tilde{\cK}$ must contain $\cK_1\otimes\cK_2$. 
Similarly, by considering that two measurements are performed independently in the respective systems, 
we obtain $\cK_1^\ast\otimes\cK_2^\ast\subset\tilde{\cK}^\ast$. 
Hence $\tilde{\cK}$ must satisfy 
\[
\tilde{\cK}_{\min} \coloneqq \cK_1\otimes\cK_2
\subset \tilde{\cK}
\subset (\cK_1^\ast\otimes\cK_2^\ast)^\ast \eqqcolon \tilde{\cK}_{\max},
\]
where we note that the relation $\tilde{\cK}_{\min}\subset\tilde{\cK}_{\max}$ holds due to Proposition~\ref{prop:cone}. 
Of course, the cone $\tilde{\cK}$ is not necessarily unique, 
since $\tilde{\cK}_{\min}\not=\tilde{\cK}_{\max}$ in general. 
In fact, even if both $\cK_1$ and $\cK_2$ are self-dual, 
the cone $\tilde{\cK}_{\min}$ is not necessarily self-dual. 
To see this fact, we discuss two quantum systems as follows: 

\begin{example}
We consider quantum theory on finite-dimensional complex Hilbert spaces $\cH_1$ and $\cH_2$. 
Then the vector spaces of the first, second, and joint systems are 
$\cT(\cH_1)$, $\cT(\cH_2)$, and $\cT(\cH_1\otimes\cH_2)$, respectively. 
States on the joint system are density matrices in $\cT_+(\cH_1\otimes\cH_2)$. 
Let $\ket{a_i},\ket{b_i}\in\cH_i$ be orthonormal vectors for each $i=1,2$. 
When defining the unit vector $\ket{\Phi} = (\ket{a_1}\ket{a_2} + \ket{b_1}\ket{b_2})/\sqrt{2}\in\cH_1\otimes\cH_2$, 
it is well-known that the state $\ketbra{\Phi}{\Phi}$ on the joint system is not \textit{separable} \cite[Section~1.4]{Hayashi}, i.e., 
\[
\ketbra{\Phi}{\Phi} \not\in \Sep(\cH_1;\cH_2) \coloneqq \cT_+(\cH_1)\otimes\cT_+(\cH_2).
\]
Thus the cone $\cT_+(\cH_1\otimes\cH_2)$ strictly contains $\Sep(\cH_1;\cH_2)$. 
Noting the inclusion relation 
\[
\Sep(\cH_1;\cH_2) \subset \cT_+(\cH_1 \otimes  \cH_2)
\subset \Sep(\cH_1;\cH_2)^\ast,
\]
we obtain $\Sep(\cH_1;\cH_2)\not=\Sep(\cH_1;\cH_2)^\ast$. 
Therefore, the cones $\cT_+(\cH_1)$ and $\cT_+(\cH_2)$ are self-dual, 
but $\Sep(\cH_1;\cH_2)$ is not self-dual.
\end{example}

\section{$\cK$-positive maps and $(\cK,u)$-dynamical maps}\label{d-maps}
In order to address Markovian dynamics, 
we describe \textit{dynamical maps}. 
Suppose that two tuples $(\cV, \braket{\cdot,\cdot}, \cK, u)$ 
and $(\cV', \braket{\cdot,\cdot}, \cK', u')$ give GPTs. 
Since a dynamical map $A$ transmits states $x\in\cS(\cK,u)$ to states $Ax\in\cS(\cK',u')$, 
a dynamical map must satisfy some properties. 
First, a dynamical map $A$ is a linear map from $\cV$ into $\cV'$ 
because $A$ needs to preserve the convex combination structure. 
Moreover, the linear map $A$ must satisfy the following properties: 

\begin{definition}[$(\cK; \cK')$-positivity]
	If a linear map $A: \cV\to\cV'$ satisfies $A\cK\subset\cK'$, 
	we say that $A$ is $(\cK; \cK')$-positive.
\end{definition}

\begin{definition}[Dual $(u; u')$-preserving]
	If a linear map $A: \cV\to\cV'$ satisfies $A^\ast u'=u$, 
	we say that $A$ is dual $(u; u')$-preserving, 
	where $A^\ast$ denotes the adjoint map of $A$.
\end{definition}

If a linear map $A$ is $(\cK; \cK')$-positive and dual $(u; u')$-preserving, 
we say that $A$ is \textit{$(\cK,u; \cK',u')$-dynamical}. 
A linear map $A$ is $(\cK,u; \cK',u')$-dynamical if and only if 
the inclusion relation $A\cS(\cK,u)\subset\cS(\cK',u')$ holds. 
In this paper, 
we consider the case 
$(\cV, \braket{\cdot,\cdot}, \cK, u) = (\cV', \braket{\cdot,\cdot}, \cK', u')$ 
unless otherwise noted. 
In this case, the above properties are called 
$\cK$-positivity, dual $u$-preserving, and $(\cK,u)$-dynamical property, respectively. 
Unless there is confusion, 
a $(\cK,u)$-dynamical map is called a \textit{dynamical map} simply.

The definition of dynamical maps is different from that of quantum channels. 
In quantum theory, quantum channels are given as TP-CP maps. 
Trace-preserving and positivity in quantum theory correspond to dual $u$-preserving and $\cK$-positivity, respectively. 
Hence a $(\cK,u)$-dynamical map is a counterpart of a trace-preserving and positive map in quantum theory. 
Moreover, our class is strictly lager than the class of TP-CP maps. 
The correspondences are summarized in Table~\ref{T1}.

\begin{table}[t]
	\centering
	\caption{Properties for linear maps in quantum theory and GPTs.}\label{T1}
	\vspace{2ex}
	\begin{tabular}{cc}
		Quantum theory&GPTs\\
		\hline
		Positivity&$\cK$-positivity\\
		Trace-preserving (TP)&Dual $u$-preserving\\
		Complete positivity (CP)&   \\
		TP and positivity&$(\cK,u)$-dynamical property
	\end{tabular}
\end{table}

As stated in Section~\ref{introduction}, 
to handle stochastic operation including SLOCC, 
we need to address dynamical maps without dual $u$-preserving. 
Fortunately, most results in this paper do not require dual $u$-preserving. 
Hence we focus on $\cK$-positive maps mainly. 
As known in preceding studies \cite{KR,Vander}, 
$\cK$-positivity guarantees a few good properties. 
Before stating them, we introduce several basic terms in linear algebra below.

For a linear map $A$ and its eigenvalue $\lambda$, 
the multiplicity as a root of the characteristic polynomial is called the \textit{algebraic multiplicity} 
and denoted by $\am(A; \lambda)$. 
Also, the dimension of the eigenspace $\cW(A; \lambda) \coloneqq \Ker(A - \lambda\id_{\cV})$ is called the \textit{geometric multiplicity} 
and denoted by $\gm(A; \lambda)$. 
For convenience, we define $\am(A; \lambda)=\gm(A; \lambda)=0$ 
if $\lambda$ is not an eigenvalue of $A$. 
For any linear map $A$ and any $\lambda\in\mathbb{C}$, 
the inequality $\gm(A; \lambda) \le \am(A; \lambda)$ holds. 
As another basic term, 
the greatest absolute value of all eigenvalues of a linear map $A$ 
is called the \textit{spectral radius} and denoted by $r(A)$ \cite{Wolf}. 
The spectral radius satisfies 
\[
r(A) = \limsup_{n\to\infty} \|A^n\|^{1/n},
\]
where $\|A\|$ is the operator norm of $A$ based on the norm on $\cV$. 
Since two arbitrary norms on a finite-dimensional vector space are uniformly equivalent to each other, 
we can select any norm instead of that above. 
Moreover, the equations $r(A_1\otimes A_2)=r(A_1)r(A_2)$ and $r(A^\ast)=r(A)$ hold. 
For details, see \cite{Wolf}.

In addition to the above terms, 
we also define the \textit{degree} of an eigenvalue of a linear map: 

\begin{definition}[Degree of eigenvalues {\cite[Definition~3.1]{Vander}}]
	For a linear map $A$ and its eigenvalue $\lambda$, 
	the degree of $\lambda$ is defined as the size of the largest Jordan block associated with $\lambda$, 
	in the Jordan canonical form of $A$.
\end{definition}

For instance, the matrix 
\[
\begin{bmatrix}
	0\\
	 &0&1\\
	 &0&0\\
	 & & &1\\
	 & & & &1
\end{bmatrix},
\]
which is already a Jordan canonical form, has the eigenvalues $0$ and $1$. 
In this case, the degrees of $0$ and $1$ are two and one, respectively. 
Now, we have been ready to state the following proposition \cite[Theorem~3.1]{Vander}: 

\begin{proposition}\label{prop:pos}
	If $A$ is a $\cK$-positive map, then the following properties hold.
	\begin{itemize}
		\item
		$r(A)$ is an eigenvalue of $A$.
		\item
		$\cK$ contains an eigenvector of $A$ associated with $r(A)$.
		\item
		The degree of $r(A)$ is greater than or equal to 
		the degree of any other eigenvalue whose absolute value is $r(A)$.
	\end{itemize}
\end{proposition}

If $\am(A; r(A))=1$, 
then the Jordan block associated with $r(A)$ is the matrix $[r(A)]$ alone, 
and thus the degree of $r(A)$ is one. 
In this case, Proposition~\ref{prop:pos} implies that 
any Jordan block associated with any eigenvalue $\lambda$ satisfying $|\lambda|=r(A)$ equals the matrix $[\lambda]$. 
In other words, $\am(A; \lambda)=\gm(A; \lambda)$ for any eigenvalue $\lambda$ satisfying $|\lambda|=r(A)$. 
We summarize the above argument as the following corollary: 

\begin{corollary}\label{coro:pos}
	If $A$ is a $\cK$-positive map and satisfies $\am(A; r(A))=1$, 
	then $\am(A; \lambda)=\gm(A; \lambda)$ for any eigenvalue $\lambda$ whose absolute value is $r(A)$.
\end{corollary}

Finally, we show the following basic propositions. 

\begin{proposition}\label{dual-pos}
	If a linear map $A$ is $\cK$-positive, 
	then $A^\ast$ is $\cK^\ast$-positive.
\end{proposition}
\begin{proof}
	Let $y\in\cK^\ast$. 
	Any $x\in\cK$ satisfies $\braket{A^\ast y,x}=\braket{y,Ax}\ge0$, 
	whence $A^\ast y\in\cK^\ast$.
\end{proof}

\begin{proposition}\label{d-map}
	Any dynamical map $A$ satisfies $r(A)=1$.
\end{proposition}
\begin{proof}
	Proposition~\ref{prop:pos} implies that $A$ has an eigenvector $x_0\in\cK$ associated with $r(A)$. 
	Since $A^\ast u=u$ and $u\in\tint(\cK^\ast)$, we have 
	$\braket{u,x_0} = \braket{A^\ast u,x_0} = \braket{u,Ax_0} = r(A)\braket{u,x_0}$ 
	and $\braket{u,x_0}>0$ because of Lemma~\ref{lem:appC1} in Appendix~\ref{appC}. 
	Therefore, $r(A)=1$.
\end{proof}

\section{Asymptotic properties for Markovian dynamics}\label{asym}
In this section, we address asymptotic properties for Markovian dynamics in GPTs. 
Sections~\ref{ad}--\ref{mix} is the discrete case, 
and Section~\ref{continuous} is the continuous case. 
Theorems~\ref{thm:q-adD}, \ref{thm:q-ad'D}, and \ref{thm:q-mixD} 
are derived from 
Theorems~\ref{thm:adD}, \ref{thm:ad'D}, and \ref{thm:mixP}, respectively. 
Also, Theorems~\ref{thm:q-c-adD}, \ref{thm:q-c-ad'D}, and \ref{thm:q-c-mixD} 
are derived from 
Theorems~\ref{thm:c-adD}, \ref{thm:c-ad'D}, and \ref{thm:c-mixD}, respectively. 
Throughout this section, 
suppose that a tuple $(\cV_i, \braket{\cdot,\cdot}, \cK_i, u_i)$ gives a GPT for each $i=1,2$. 
Also, suppose that a tuple $(\cV_1\otimes\cV_2, \braket{\cdot,\cdot}, \tilde{\cK}, u_1\otimes u_2)$ 
gives a GPT on the joint system, where 
$\tilde{\cK}_{\min} \subset \tilde{\cK} \subset \tilde{\cK}_{\max}$. 
We use \textit{tilde} to express to be bipartite 
in the same way as Section~\ref{summary}.

\subsection{Asymptotic decoupling}\label{ad}
First, to define asymptotic decoupling in the framework of GPTs, 
for a state $\tilde{x}$ on the joint system, 
we must define the \textit{reduced state} $\pi_i\tilde{x}$ on the $i$-th system. 
The reduced state $\pi_i\tilde{x}$ represents the state from a viewpoint of an observer of the $i$-th system. 
In our context, it is convenient to define the reduced states $\pi_1\tilde{x}$ and $\pi_2\tilde{x}$ 
for a state $\tilde{x}\in\cS(\tilde{\cK}_{\max}, u_1\otimes u_2)$ 
because the cone $\tilde{\cK}_{\max}$ is largest 
of all cones $\tilde{\cK}$ of the joint system. 
The \textit{reduced state} $\pi_1(\tilde{x}; u_1,u_2)$ of 
$\tilde{x}\in\cS(\tilde{\cK}_{\max}, u_1\otimes u_2)$ 
is defined by the condition 
\begin{equation}
	\forall y_1\in\cK_1^\ast,\quad
	\braket{y_1, \pi_1(\tilde{x}; u_1,u_2)} = \braket{y_1\otimes u_2, \tilde{x}}.
	\label{eq:pi}
\end{equation}
Since $\cK_1^\ast$ has non-empty interior, 
the Riesz representation theorem guarantees the unique existence of $\pi_1(\tilde{x}; u_1,u_2)\in\cV_1$. 
Let us verify $\pi_1(\tilde{x}; u_1,u_2)\in\cS(\cK_1,u_1)$ as follows. 
Since $\tilde{x}\in\cS(\tilde{\cK}_{\max}, u_1\otimes u_2)$, 
the right-hand side of \eqref{eq:pi} is greater than or equal to zero, 
which implies $\pi_1(\tilde{x}; u_1,u_2)\in\cK_1$. 
Moreover, putting $y_1=u_1$ in \eqref{eq:pi}, 
we find that 
\[
\braket{u_1, \pi_1(\tilde{x}; u_1,u_2)} = \braket{u_1\otimes u_2, \tilde{x}} = 1.
\]
Therefore, $\pi_1(\tilde{x}; u_1,u_2)\in\cS(\cK_1,u_1)$. 
The reduced state $\pi_2(\tilde{x}; u_1,u_2)\in\cS(\cK_2,u_2)$ is also defined similarly. 
Unless there is confusion, 
we express $\pi_i(\tilde{x}; u_1,u_2)$ as $\pi_i\tilde{x}$ simply. 
Then $\pi_i$ can be naturally regarded as a linear map from $\cV_1\otimes\cV_2$ onto $\cV_i$.

Next, to define asymptotic decoupling for (not necessarily $\tilde{\cK}$-positive) linear maps, 
we introduce the set 
\[
\cD_n(\tilde{A}) \coloneqq \Set{\tilde{x}\in\tilde{\cK}\setminus\{ 0 \} | 
\tilde{A}^k \tilde{x}
\in \tilde{\cK}_{\max}\setminus\{ 0 \}\ (\forall k\ge n)}
\]
for a linear map $\tilde{A}$ and $n\in\mathbb{N}$. 
Then the monotonicity 
$\cD_1(\tilde{A}) \subset \cD_2(\tilde{A}) \subset \cdots$ holds. 
By using the sets $\cD_n(\tilde{A})$, 
asymptotic decoupling is defined as follows: 

\begin{definition}[$(\tilde{\cK}, u_1, u_2)$-asymptotic decoupling]\label{def:adD}
	Suppose that a dual $u_1\otimes u_2$-preserving map $\tilde{A}$ 
	satisfies 
	\begin{equation}
		\bigcup_{n=1}^\infty \cD_n(\tilde{A}) = \tilde{\cK}\setminus\{ 0 \}.
		\label{ad-assumption}
	\end{equation}
	Then $\tilde{A}$ is $(\tilde{\cK}, u_1, u_2)$-asymptotically decoupling 
	if any state $\tilde{x}\in\cS(\tilde{\cK}, u_1\otimes u_2)$ satisfies 
	\begin{equation}
		\tilde{A}^n \tilde{x}
		= \pi_1\tilde{A}^n \tilde{x}\otimes\pi_2\tilde{A}^n \tilde{x} + o(1)
		\quad(n\to\infty). \label{eq:ad}
	\end{equation}
\end{definition}

The assumption \eqref{ad-assumption} allows us to consider the limit \eqref{eq:ad}. 
Definition~\ref{def:adD} needs \eqref{ad-assumption}, 
but one may not mind \eqref{ad-assumption} 
because for any two dynamical maps $A_1$ and $A_2$, 
the tensor product map $\tilde{A}=A_1\otimes A_2$ satisfies \eqref{ad-assumption} 
(see Lemma~\ref{lem:A1} in Appendix~\ref{appA}). 
In this case, we can simplify \eqref{eq:ad}: 
\[
(A_1\otimes A_2)^n(\tilde{x} - \pi_1\tilde{x}\otimes\pi_2\tilde{x}) = o(1).
\]

Next, to clarify a necessary and sufficient condition of asymptotic decoupling, 
we introduce another asymptotic property, namely, mixing: 

\begin{definition}[Mixing]\label{def:mixD}
	A dynamical map $A$ is mixing if 
	there exists a state $x_0\in\cS(\cK,u)$ such that 
	any state $x\in\cS(\cK,u)$ satisfies 
	\begin{equation*}
		\lim_{n\to\infty} A^n x = x_0.
	\end{equation*}
	The state $x_0$ is called the stationary state.
\end{definition}

In the same way as Section~\ref{summary}, 
we give a necessary and sufficient condition of asymptotic decoupling 
for a tensor product map $A_1\otimes A_2$ and a linear map $\tilde{A}$ in this order: 

\begin{theorem}\label{thm:adD}
	For any two dynamical maps $A_1$ and $A_2$, 
	the following conditions are equivalent.
	\begin{enumerate}
		\item\label{adD1}
		$A_1\otimes A_2$ is $(\tilde{\cK},u_1,u_2)$-asymptotically decoupling.
		\item\label{adD2}
		$A_1$ or $A_2$ at least one is mixing.
	\end{enumerate}
\end{theorem}

\begin{theorem}\label{thm:ad'D}
	For any dual $u_1\otimes u_2$-preserving map $\tilde{A}$ with \eqref{ad-assumption}, 
	the following conditions are equivalent.
	\begin{enumerate}
		\item\label{ad'D1}
		$\tilde{A}$ is $(\tilde{\cK},u_1,u_2)$-asymptotically decoupling.
		\item\label{ad'D2}
		There exist a number $i_1\in\{1,2\}$ and a state $x_{0,i_1}\in\cS(\cK_{i_1},u_{i_1})$ 
		such that any state $\tilde{x}\in\cS(\tilde{\cK}, u_1\otimes u_2)$ satisfies 
		\[
		\tilde{A}^n \tilde{x}
		= x_{0,i_1}\otimes\pi_{i_2}\tilde{A}^n \tilde{x} + o(1)
		\quad (n\to\infty),
		\]
		where $i_2$ is an element of $\{1,2\}$ except for $i_1$.
	\end{enumerate}
\end{theorem}

Since Theorem~\ref{thm:adD} follows from Theorem~\ref{thm:ad'D} immediately, 
all we need is to show Theorem~\ref{thm:ad'D}. 
To show Theorem~\ref{thm:ad'D}, 
we introduce a non-asymptotic property on the joint system, namely, \textit{decoupling}: 

\begin{definition}[$(\tilde{\cK}, u_1, u_2)$-decoupling]
	Suppose that a dual $u_1\otimes u_2$-preserving map $\tilde{A}$ satisfies 
	\begin{equation}
		\tilde{A}(\tilde{\cK}\setminus\{ 0 \}) \subset \tilde{\cK}_{\max}\setminus\{ 0 \}.
		\label{d-assumption}
	\end{equation}
	Then $\tilde{A}$ is decoupling 
	if any state $\tilde{x}\in\cS(\tilde{\cK}, u_1\otimes u_2)$ satisfies 
	\begin{equation*}
		\tilde{A}\tilde{x}
		= \pi_1\tilde{A}\tilde{x}\otimes\pi_2\tilde{A}\tilde{x}.
	\end{equation*}
\end{definition}

Decoupling asserts that 
$\tilde{A}$ breaks the correlation between two systems completely. 
To prove Theorem~\ref{thm:ad'D}, 
we give a necessary and sufficient condition of decoupling, 
which can be regarded as the non-asymptotic version of Theorem~\ref{thm:adD}.

\begin{lemma}\label{decoupling}
	For any dual $u_1\otimes u_2$-preserving map $\tilde{A}$ with \eqref{d-assumption}, 
	the following conditions are equivalent.
	\begin{enumerate}
		\item\label{decoupling1}
		$\tilde{A}$ is $(\tilde{\cK},u_1,u_2)$-decoupling.
		\item\label{decoupling2}
		There exist a number $i_1\in\{1,2\}$ and a state $x_{0,i_1}\in\cS(\cK_{i_1},u_{i_1})$ 
		such that any state $\tilde{x}\in\cS(\tilde{\cK}, u_1\otimes u_2)$ satisfies 
		\[
		\tilde{A}\tilde{x}
		= x_{0,i_1}\otimes\pi_{i_2}\tilde{A}\tilde{x},
		\]
		where $i_2$ is an element of $\{1,2\}$ except for $i_1$.
	\end{enumerate}
\end{lemma}
\begin{proof}
	\eqref{decoupling2}$\Rightarrow$\eqref{decoupling1}. 
	This implication follows from the definition.
	\par
	\eqref{decoupling1}$\Rightarrow$\eqref{decoupling2}. 
	Assume condition~\eqref{decoupling1}. 
	Then condition~\eqref{decoupling2} follows from the following steps.
	\par
	\setcounter{count}{0}
	\noindent\textbf{Step~\num.}
	First, we show that any two states 
	$\tilde{x},\tilde{x}' \in \cS(\tilde{\cK}, u_1\otimes u_2)$ satisfy 
	$\pi_1\tilde{A}\tilde{x} = \pi_1\tilde{A}\tilde{x}'$ 
	or $\pi_2\tilde{A}\tilde{x} = \pi_2\tilde{A}\tilde{x}'$ 
	at least one. 
	Let $\tilde{x},\tilde{x}' \in \cS(\tilde{\cK}, u_1\otimes u_2)$. 
	Then the $(a)$ linearity and $(b)$ decoupling of $\tilde{A}$ yield 
	\begin{align*}
		&\frac{1}{2} \pi_1\tilde{A}\tilde{x} \otimes \pi_2\tilde{A}\tilde{x}
		+ \frac{1}{2} \pi_1\tilde{A}\tilde{x}' \otimes \pi_2\tilde{A}\tilde{x}'\\
		\overset{(b)}{=}& \frac{1}{2}\tilde{A}\tilde{x} + \frac{1}{2}\tilde{A}\tilde{x}'
		\overset{(a)}{=} \tilde{A}\Bigl( \frac{1}{2}\tilde{x} + \frac{1}{2}\tilde{x}' \Bigr)\\
		\overset{(b)}{=}& \pi_1\tilde{A}\Bigl( \frac{1}{2}\tilde{x} + \frac{1}{2}\tilde{x}' \Bigr)
		\otimes \pi_2\tilde{A}\Bigl( \frac{1}{2}\tilde{x} + \frac{1}{2}\tilde{x}' \Bigr)\\
		\overset{(a)}{=}& \frac{1}{4} \pi_1\tilde{A}\tilde{x} \otimes \pi_2\tilde{A}\tilde{x}
		+ \frac{1}{4} \pi_1\tilde{A}\tilde{x}' \otimes \pi_2\tilde{A}\tilde{x}'\\
		&+ \frac{1}{4} \pi_1\tilde{A}\tilde{x} \otimes \pi_2\tilde{A}\tilde{x}'
		+ \frac{1}{4} \pi_1\tilde{A}\tilde{x}' \otimes \pi_2\tilde{A}\tilde{x}.
	\end{align*}
	Calculating the above both sides, we obtain 
	\[
	(\pi_1\tilde{A}\tilde{x} - \pi_1\tilde{A}\tilde{x}')
	\otimes (\pi_2\tilde{A}\tilde{x} - \pi_2\tilde{A}\tilde{x}')
	= 0.
	\]
	Therefore, any two states 
	$\tilde{x},\tilde{x}' \in \cS(\tilde{\cK}, u_1\otimes u_2)$ satisfy 
	$\pi_1\tilde{A}\tilde{x} = \pi_1\tilde{A}\tilde{x}'$ 
	or $\pi_2\tilde{A}\tilde{x} = \pi_2\tilde{A}\tilde{x}'$ 
	at least one.
	\par
	\noindent\textbf{Step~\num.}
	Next, we show that there exists a number $i\in\{1,2\}$ 
	such that any state $\tilde{x}$ satisfies 
	$\pi_i\tilde{A}\tilde{x} = \pi_i\tilde{A}\tilde{x}'$ 
	by contradiction. 
	Suppose that there existed two pairs $(\tilde{x}_1,\tilde{x}'_1)$ 
	and $(\tilde{x}_2,\tilde{x}'_2)$ of two states such that 
	\begin{equation}
		\pi_1\tilde{A}\tilde{x}_1 \not= \pi_1\tilde{A}\tilde{x}'_1,\quad
		\pi_2\tilde{A}\tilde{x}_2 \not= \pi_2\tilde{A}\tilde{x}'_2.
		\label{eq01}
	\end{equation}
	Then step~1 implies 
	\begin{equation*}
		\pi_2\tilde{A}\tilde{x}_1 = \pi_2\tilde{A}\tilde{x}'_1,\quad
		\pi_1\tilde{A}\tilde{x}_2 = \pi_1\tilde{A}\tilde{x}'_2.
	\end{equation*}
	Also, thanks to the pigeonhole principle, 
	there exists a number $i_0\in\{1,2\}$ 
	such that two of the following three cases hold: 
	\begin{enumerate}
		\item
		$\pi_{i_0}\tilde{A}(\tilde{x}_1 + \tilde{x}'_1)/2
		= \pi_{i_0}\tilde{A}(\tilde{x}_2 + \tilde{x}'_2)/2$,
		\item
		$\pi_{i_0}\tilde{A}(\tilde{x}_1 + \tilde{x}_2)/2
		= \pi_{i_0}\tilde{A}(\tilde{x}'_1 + \tilde{x}'_2)/2$,
		\item
		$\pi_{i_0}\tilde{A}(\tilde{x}_1 + \tilde{x}'_2)/2
		= \pi_{i_0}\tilde{A}(\tilde{x}'_1 + \tilde{x}_2)/2$.
	\end{enumerate}
	For instance, suppose that cases~(i) and (ii) (which we call \{(i),(ii)\} case) hold. 
	Taking the sum of cases~(i) and (ii), we obtain 
	$\pi_{i_0}\tilde{A}\tilde{x}_1 = \pi_{i_0}\tilde{A}\tilde{x}'_2$. 
	This equation and case~(i) yield 
	$\pi_{i_0}\tilde{A}\tilde{x}'_1 = \pi_{i_0}\tilde{A}\tilde{x}_2$. 
	However, if $i_0=1$, it follows that 
	\[
	\pi_1\tilde{A}\tilde{x}_1
	= \pi_1\tilde{A}\tilde{x}'_2
	= \pi_1\tilde{A}\tilde{x}_2
	= \pi_1\tilde{A}\tilde{x}'_1,
	\]
	which contradicts \eqref{eq01}. 
	Similarly, if $i_0=2$, it follows that 
	\[
	\pi_2\tilde{A}\tilde{x}_2
	= \pi_2\tilde{A}\tilde{x}'_1
	= \pi_2\tilde{A}\tilde{x}_1
	= \pi_2\tilde{A}\tilde{x}'_2,
	\]
	which contradicts \eqref{eq01}. 
	The other cases, namely, \{(i),(iii)\} and \{(ii),(iii)\} cases also have contradictions in the same way. 
	Therefore, there exists a number $i\in\{1,2\}$ 
	such that any two states 
	$\tilde{x},\tilde{x}' \in \cS(\tilde{\cK}, u_1\otimes u_2)$ satisfy 
	$\pi_i\tilde{A}\tilde{x} = \pi_i\tilde{A}\tilde{x}'$.
	\par
	\noindent\textbf{Step~\num.}
	Step~2 implies that 
	$\pi_i\tilde{A}\tilde{x}$ does not depend on $\tilde{x}\in\cS(\tilde{\cK}, u_1\otimes u_2)$. 
	That is, there exists a state $x_{0,i}\in\cS(\cK_i,u_i)$ 
	such that any state $\tilde{x}\in\cS(\tilde{\cK}, u_1\otimes u_2)$ satisfies 
	$\pi_i\tilde{A}\tilde{x} = x_{0,i}$. 
	Thus this equation and condition~\eqref{decoupling1} yield condition~\eqref{decoupling2}.
\end{proof}

Moreover, to prove Theorem~\ref{thm:ad'D}, 
we need the following technical lemmas: 

\begin{lemma}[Compactness]\label{map-compact}
	The set of all $(\cK,u; \cK',u')$-dynamical maps is compact.
\end{lemma}

\begin{lemma}[Invariance of ranges]\label{inv-ran}
	Let $A$ be a dynamical map. 
	If $A_0$ and $A'_0$ are the limits of 
	two convergent subsequences $\{A^n\}_{n\in N\subset\mathbb{N}}$ 
	and $\{A^n\}_{n\in N'\subset\mathbb{N}}$, respectively, 
	then $\Ran A_0=\Ran A'_0$.
\end{lemma}

Lemmas~\ref{map-compact} and \ref{inv-ran} are proved in Appendix~\ref{appA}.

\begin{proof}[Proof of Theorem~$\ref{thm:ad'D}$]
	\eqref{ad'D2}$\Rightarrow$\eqref{ad'D1}. 
	This implication follows from the definition.
	\par
	\eqref{ad'D1}$\Rightarrow$\eqref{ad'D2}. 
	Assume condition~\eqref{ad'D1}. 
	Then condition~\eqref{ad'D2} follows from the following steps.
	\par
	\setcounter{count}{0}
	\noindent\textbf{Step~\num.}
	First, we show that there exist a number $i_1\in\{1,2\}$ 
	and a state $x_{0,i_1}\in\cS(\cK_{i_1},u_{i_1})$ such that 
	the limit $\tilde{A}_0$ of any convergent subsequence of 
	$\{\tilde{A}^n\}_{n\in\mathbb{N}}$ satisfies 
	\[
	\tilde{A}_0(\cdot) = x_{0,i_1}\otimes\pi_{i_2}\tilde{A}_0(\cdot),
	\]
	where $i_2$ is an element of $\{1,2\}$ except for $i_1$. 
	Take an arbitrary convergent subsequence $\{\tilde{A}^n\}_{n\in N\subset\mathbb{N}}$. 
	Then the limit $\tilde{A}_0$ of $\{\tilde{A}^n\}_{n\in N}$ is decoupling. 
	By changing the subscripts if necessary, 
	Lemma~\ref{decoupling} implies that 
	there exists a state $x_{0,1}\in\cS(\cK_1,u_1)$ such that 
	\[
	\tilde{A}_0(\cdot) = x_{0,1}\otimes\pi_2\tilde{A}_0(\cdot).
	\]
	The subscript $1$ and the state $x_{0,1}$ are independent of 
	convergent subsequences of $\{\tilde{A}^n\}_{n\in\mathbb{N}}$, 
	due to Lemma~\ref{inv-ran}. 
	Therefore, the assertion in this step follows.
	\par
	\noindent\textbf{Step~\num.}
	Next, we show that the sequence $\{\pi_1\tilde{A}^n\}_{n\in\mathbb{N}}$ 
	converges to $\braket{u_1\otimes u_2, \cdot}x_{0,1}$. 
	Take an arbitrary convergent subsequence $\{\pi_1\tilde{A}^n\}_{n\in N\subset\mathbb{N}}$. 
	Due to Lemma~\ref{map-compact}, 
	we can take a subsequence $\{\pi_1\tilde{A}^n\}_{n\in N'\subset N}$ 
	such that the sequence $\{\tilde{A}^n\}_{n\in N'}$ converges. 
	Step~1 implies that 
	the limit of $\{\pi_1\tilde{A}^n\}_{n\in N'}$ is $\braket{u_1\otimes u_2, \cdot}x_{0,1}$. 
	Therefore, any convergent subsequence of $\{\pi_1\tilde{A}^n\}_{n\in\mathbb{N}}$ 
	converges to $\braket{u_1\otimes u_2, \cdot}x_{0,1}$. 
	Thus Lemma~\ref{map-compact} implies that the sequence 
	$\{\pi_1\tilde{A}^n\}_{n\in\mathbb{N}}$ converges to $\braket{u_1\otimes u_2, \cdot}x_{0,1}$.
	\par
	\noindent\textbf{Step~\num.}
	Finally, step~2 and condition~\eqref{ad'D1} imply that 
	any state $\tilde{x}\in\cS(\tilde{\cK}, u_1\otimes u_2)$ satisfies 
	\begin{align*}
		\tilde{A}^n \tilde{x}
		&= \pi_1\tilde{A}^n \tilde{x}\otimes\pi_2\tilde{A}^n \tilde{x} + o(1)\\
		&= x_{0,1}\otimes\pi_2\tilde{A}^n \tilde{x} + o(1),
	\end{align*}
	which is just condition~\eqref{ad'D2}.
\end{proof}


So far, we have defined asymptotic decoupling for dynamical maps, 
and have given its necessary and sufficient condition. 
However, to handle stochastic operation including SLOCC, 
we need to address dynamical maps without dual $u_1\otimes u_2$-preserving. 
In the case without dual $u_1\otimes u_2$-preserving, 
the definitions of asymptotic decoupling and mixing are modified as follows.

\begin{definition}[$(\tilde{\cK}, u_1, u_2)$-asymptotic decoupling]\label{def:adP}
	A linear map $\tilde{A}$ with \eqref{ad-assumption} 
	is $(\tilde{\cK}, u_1, u_2)$-asymptotically decoupling 
	if any state $\tilde{x}\in\cS(\tilde{\cK}, u_1\otimes u_2)$ satisfies 
	\begin{equation*}
		\frac{\tilde{A}^n \tilde{x}}{\braket{u_1\otimes u_2, \tilde{A}^n \tilde{x}}}
		= \frac{\pi_1\tilde{A}^n \tilde{x}}{\braket{u_1\otimes u_2, \tilde{A}^n \tilde{x}}}
		\otimes\frac{\pi_2\tilde{A}^n \tilde{x}}{\braket{u_1\otimes u_2, \tilde{A}^n \tilde{x}}}
		+ o(1)
		\quad(n\to\infty).
	\end{equation*}
\end{definition}

\begin{definition}[Mixing]\label{def:mixL}
	A linear map $A$ with $r(A)>0$ is mixing 
	if there exist nonzero $x_0,y_0\in\cV$ such that any $x\in\cV$ satisfies 
	\begin{equation*}
		\lim_{n\to\infty} (r(A)^{-1}A)^n x = \braket{y_0,x}x_0.
	\end{equation*}
	The vectors $x_0$ and $y_0$ are called a stationary vector and a dual stationary vector, respectively.
\end{definition}

Of course, if $\tilde{A}$ is dual $u_1\otimes u_2$-preserving, 
Definitions~\ref{def:adD} and \ref{def:adP} are equivalent to each other. 
If $A$ is a dynamical map, 
Definitions~\ref{def:mixD} and \ref{def:mixL} are also equivalent to each other. 
Indeed, if a $(\cK,u)$-dynamical map $A$ is mixing 
in the sense of Definition~\ref{def:mixL}, 
then Proposition~\ref{d-map} implies $r(A)=1$, 
and we can choose a stationary vector $x_0$ and a dual stationary vector $y_0$ 
as $x_0\in\cS(\cK,u)$ and $y_0=u$. 
Therefore, Definitions~\ref{def:adP} and \ref{def:mixL} 
are extensions of Definitions~\ref{def:adD} and \ref{def:mixD}, respectively. 
By using Definitions~\ref{def:adP} and \ref{def:mixL}, 
Theorem~\ref{thm:adD} is extended as follows: 

\begin{theorem}\label{thm:adP}
	Suppose that the adjoint map of a $\cK_i$-positive map $A_i$ 
	has an eigenvector $y_{0,i}\in\tint(\cK_i)$ for each $i=1,2$. 
	Then the following conditions are equivalent.
	\begin{enumerate}
		\item\label{adP1}
		$A_1\otimes A_2$ is $(\tilde{\cK},u_1,u_2)$-asymptotically decoupling.
		\item\label{adP2}
		$A_1$ or $A_2$ at least one is mixing.
	\end{enumerate}
\end{theorem}

Theorem~\ref{thm:adP} follows from Theorem~\ref{thm:adD} and the next lemma. 
The lemma asserts that $(\tilde{\cK}, u_1, u_2)$-asymptotic decoupling 
does not depend on the unit effect $u_1$ or $u_2$.

\begin{lemma}\label{lem:ad}
	Let $u'_i\in\tint(\cK_i^\ast)$ for each $i=1,2$. 
	If a linear map $\tilde{A}$ with \eqref{ad-assumption} 
	is $(\tilde{\cK}, u_1, u_2)$-asymptotically decoupling, 
	then $\tilde{A}$ is $(\tilde{\cK}, u'_1, u'_2)$-asymptotically decoupling.
\end{lemma}

Lemma~\ref{lem:ad} is proved in Appendix~\ref{appA}.

\begin{proof}[Proof of Theorem~$\ref{thm:adP}$]
	Without loss of generality, we may show the equivalence 
	by assuming that the eigenvector $y_{0,i}\in\tint(\cK_i)$ 
	equals the unit effect $u_i$ for each $i=1,2$, 
	due to Lemma~\ref{lem:ad}. 
	In the case with $y_{0,i}=u_i$, 
	Theorem~\ref{thm:adP} follows from Theorem~\ref{thm:adD}. 
	Therefore, the proof is completed.
\end{proof}

\subsection{Ergodicity}\label{erg}
As proved in Section~\ref{ad}, 
asymptotic decoupling is equivalent to local mixing. 
Hence it is meaningful to investigate criteria of mixing. 
Although spectral criterion is known as a criterion of mixing, 
we give another criterion in the next subsection. 
For this purpose, we investigate ergodicity in this subsection, 
which is a weaker property than mixing. 
In quantum theory, ergodicity is defined for quantum channels. 
In GPTs, however, the definition of dynamical maps on the joint system depends on the cone $\tilde{\cK}$ of the joint system, 
and the cone $\tilde{\cK}$ is not unique. 
For this reason, it is convenient to define ergodicity (and mixing) for linear maps in our context. 
Therefore, we employ the following definition: 

\begin{definition}[Ergodicity]\label{def:ergL}
	A linear map $A$ with $r(A)>0$ is ergodic if 
	there exist nonzero $x_0,y_0\in\cV$ such that any $x\in\cV$ satisfies 
	\begin{equation*}
		\lim_{n\to\infty} \frac{1}{n}\sum_{k=0}^{n-1} (r(A)^{-1}A)^k x = \braket{y_0,x}x_0.
	\end{equation*}
	The vectors $x_0$ and $y_0$ are called a stationary vector and a dual stationary vector, respectively.
\end{definition}

If $A$ is a dynamical map, 
Definition~\ref{def:ergL} can be represented like Definition~\ref{def:q-erg}. 
Indeed, if a $(\cK,u)$-dynamical map $A$ is ergodic, 
then Proposition~\ref{d-map} implies $r(A)=1$, 
and we can choose a stationary vector $x_0$ and a dual stationary vector $y_0$ 
as $x_0\in\cS(\cK,u)$ and $y_0=u$. 
The above relation is similar to 
the relation between Definitions~\ref{def:mixD} and \ref{def:mixL}. 
Now, Definition~\ref{def:ergL} cannot be checked directly by numerical computation. 
Its numerical check is realized by the following technical lemma: 

\begin{proposition}\label{prop:ergL}
	For any linear map $A$ with $r(A)>0$, the following conditions are equivalent.
	\begin{enumerate}
		\item\label{ergL1}
		$A$ is ergodic.
		\item\label{ergL2}
		$\am(A; r(A))=1$ and $\am(A; \lambda)=\gm(A; \lambda)$ 
		for any $\lambda\in\mathbb{C}$ whose absolute value is $r(A)$.
	\end{enumerate}
\end{proposition}

Proposition~\ref{prop:ergL} is proved in Appendix~\ref{appA}. 
If $A$ is $\cK$-positive, 
condition~\eqref{ergL2} in Proposition~\ref{prop:ergL} turns to a simpler condition: 

\begin{proposition}\label{prop:ergP}
	For any $\cK$-positive map $A$ with $r(A)>0$, the following conditions are equivalent.
	\begin{enumerate}
		\item\label{ergP1}
		$A$ is ergodic.
		\item\label{ergP2}
		$\am(A; r(A))=1$.
	\end{enumerate}
\end{proposition}
\begin{proof}
	\eqref{ergP1}$\Rightarrow$\eqref{ergP2}. 
	This implication follows from Proposition~\ref{prop:ergL}.
	\par
	\eqref{ergP2}$\Rightarrow$\eqref{ergP1}. 
	All we need is to show condition~\eqref{ergL2} in Proposition~\ref{prop:ergL}. 
	Suppose that $\lambda\in\mathbb{C}$ satisfies $|\lambda|=r(A)$. 
	If $\lambda$ is not an eigenvalue of $A$, of course, 
	the equation $\am(A; \lambda)=\gm(A; \lambda)=0$ holds. 
	If $\lambda$ is an eigenvalue of $A$, 
	condition~\eqref{ergP2} and Corollary~\ref{coro:pos} imply that 
	$\am(A; \lambda)=\gm(A; \lambda)$. 
	Therefore, condition~\eqref{ergL2} in Proposition~\ref{prop:ergL} holds.
\end{proof}

If $A$ is a dynamical map, 
condition~\eqref{ergP2} in Proposition~\ref{prop:ergP} turns to a weaker condition: 

\begin{proposition}\label{prop:ergD}
	For any dynamical map $A$, the following conditions are equivalent.
	\begin{enumerate}
		\item\label{ergD1}
		$A$ is ergodic.
		\item\label{ergD2}
		$\gm(A; 1)=1$. In other words, $\dim\Ker(A - \id_{\cV})=1$.
	\end{enumerate}
\end{proposition}

Proposition~\ref{prop:ergD} is well-known in quantum theory 
\cite[appendix]{BG}, \cite[Corollary~2]{Burgarth}. 
Here, to prove Proposition~\ref{prop:ergD} 
we use Proposition~\ref{prop:ergP} and \cite[Lemma~5]{BG}. 
Applying \cite[Lemma~5]{BG} to our setting, 
we obtain the following proposition immediately: 

\begin{proposition}\label{prop:fixed}
	If a $(\cK,u)$-dynamical map $A$ 
	has only one fixed point in $\cS(\cK,u)$, 
	then $A$ is ergodic.
\end{proposition}

\begin{proof}[Proof of Proposition~$\ref{prop:ergD}$]
	\eqref{ergD1}$\Rightarrow$\eqref{ergD2}. 
	This implication follows from Proposition~\ref{prop:ergP}.
	\par
	\eqref{ergD2}$\Rightarrow$\eqref{ergD1}. 
	Since $A$ is $(\cK,u)$-dynamical, 
	Propositions~\ref{prop:pos} and \ref{d-map} imply that 
	$A$ has an eigenvector $x_0\in\cK$ associated with $r(A)=1$. 
	Without loss of generality, we may assume $x_0\in\cS(\cK,u)$. 
	Thanks to condition~\eqref{ergD2}, 
	the dynamical map $A$ has the only one fixed point $x_0$ in $\cS(\cK,u)$. 
	Thus Proposition~\ref{prop:fixed} implies condition~\eqref{ergD1}.
\end{proof}

Some researchers should know Propositions~\ref{prop:ergL} and \ref{prop:ergP}, 
but they seem not to be published. 
Proposition~\ref{prop:ergD} should also be known 
because the quantum version of Proposition~\ref{prop:ergD} is well-known, 
and the proof is the same as that of the quantum version. 
However, the above three propositions are not so trivial due to the difference among them. 
If we omitted the proofs of them, 
a large logical gap would occur, 
and thus we have proved the three propositions in this subsection.

\subsection{Mixing}\label{mix}
In this subsection, 
we give a criterion of mixing that is different from spectral criterion. 
Our criterion consists of the following two parts: 
(i) Proposition~\ref{prop:ergD} and 
(ii) the equivalence between mixing for a dynamical map $A$ 
and ergodicity for the two-fold tensor product map $A^{\otimes2}$. 
Since (i) is already known as stated in Section~\ref{erg}, 
(ii) is an essential and additional part. 
First, we begin with spectral criterion: 

\begin{proposition}[Spectral criterion]\label{spectral}
	For any linear map $A$ with $r(A)>0$, 
	the following conditions are equivalent.
	\begin{enumerate}
		\item
		$A$ is mixing.
		\item
		$\am(A; r(A))=1$ and $\am(A; \lambda)=0$ 
		for any $\lambda\not=r(A)$ whose absolute value is $r(A)$.
	\end{enumerate}
\end{proposition}

Spectral criterion is well-known in quantum theory \cite[Theorem~7]{BG}, 
but the above general case seems not to be published. 
However, Proposition~\ref{spectral} can be proved in the same way as the quantum case, 
namely, by considering the Jordan canonical form of $A$. 
Moreover, the statement does not change in the three cases, 
namely, for linear maps, for $\cK$-positive maps, and for dynamical maps. 
This is why we do not prove Proposition~\ref{spectral}.

Next, let us give other conditions equivalent to mixing. 

\begin{theorem}\label{thm:mixL}
	For any linear map $A$ with $r(A)>0$, 
	the following conditions are equivalent.
	\begin{enumerate}
		\item\label{mixL1}
		$A$ is mixing.
		\item\label{mixL2}
		$A^{\otimes2}$ is mixing, and $A$ has the eigenvalue $r(A)$.
		\item\label{mixL3}
		$A^{\otimes2}$ is ergodic, and $A$ has the eigenvalue $r(A)$.
		\item\label{mixL4}
		$\gm(A^{\otimes2}; r(A)^2)=1$, and $A$ has the eigenvalue $r(A)$.
	\end{enumerate}
\end{theorem}

If $A$ is $\cK$-positive, 
Proposition~\ref{prop:pos} guarantees that $A$ has the eigenvalue $r(A)$. 
Thus, in this case, 
conditions in Theorem~\ref{thm:mixL} turn to simpler conditions: 

\begin{theorem}\label{thm:mixP}
	For any $\cK$-positive map $A$ with $r(A)>0$, 
	the following conditions are equivalent.
	\begin{enumerate}
		\item\label{mixP1}
		$A$ is mixing.
		\item\label{mixP2}
		$A^{\otimes2}$ is mixing.
		\item\label{mixP3}
		$A^{\otimes2}$ is ergodic.
		\item\label{mixP4}
		$\gm(A^{\otimes2}; r(A)^2)=1$.
	\end{enumerate}
\end{theorem}

If $A$ is a dynamical map, 
Proposition~\ref{d-map} guarantees $r(A)=1$, 
and thus condition~\eqref{mixP4} can be represented as 
$\dim\Ker(A^{\otimes2} - \id_{\cV^{\otimes2}})=1$. 
Hence Theorem~\ref{thm:mixP} gives a criterion of mixing 
that is a system of linear equations. 
On the other hand, Proposition~\ref{spectral} requires us to solve an eigenequation. 
An eigenequation cannot be solved by finite steps in general, 
but a system of linear equations can always be solved, 
which is an advantage of Theorem~\ref{thm:mixP}. 
Of course, some criteria of mixing that does not require us to solve an eigenequation 
are known for special classes of quantum channels. 
For instance, Burgarth et al.\ \cite[Theorems~10 and 13]{Burgarth} proved that, 
for any quantum channel $\Gamma$ with a strictly positive semi-definite fixed point, 
$\Gamma$ is mixing if and only if $\Gamma^k$ is ergodic for all integers $1\le k\le(\dim\cH)^2$; 
a unital quantum channel $\Gamma$ is mixing if $\Gamma^\ast\circ\Gamma$ is ergodic.

To prove Theorem~\ref{thm:mixL}, 
let us show the following preliminary lemma: 

\begin{lemma}\label{lem:mix}
	If a linear map $A_i$ is mixing for each $i=1,2$, 
	then $A_1\otimes A_2$ is also mixing.
\end{lemma}
\begin{proof}
	Without loss of generality, we may assume $r(A_1)=r(A_2)=1$. 
	Then $r(A_1\otimes A_2)=r(A_1)r(A_2)=1$. 
	For all $i=1,2$ and $x_i\in\cV_i$, 
	\[
	\lim_{n\to\infty} A_i^n x_i = \braket{y_{0,i},x_i}x_{0,i},
	\]
	where $x_{0,i}$ and $y_{0,i}$ are a stationary vector and a dual stationary vector of $A_i$, respectively. 
	Thus any $\tilde{x}\in\cV_1\otimes\cV_2$ satisfies 
	\[
	\lim_{n\to\infty} (A_1\otimes A_2)^n \tilde{x}
	= \braket{y_{0,1}\otimes y_{0,2}, \tilde{x}}x_{0,1}\otimes x_{0,2}.
	\]
	Therefore, $A_1\otimes A_2$ is mixing.
\end{proof}

From now on, the symbol $\vspan(\cX)$ denotes the linear span of a subset $\cX\subset\cV$. 
Also, we recall the definition $\cW(A; \lambda) = \Ker(A - \id_{\cV})$ for a linear map $A$ and its eigenvalue $\lambda$.

\begin{proof}[Proof of Theorem~$\ref{thm:mixL}$]
	\eqref{mixL1}$\Rightarrow$\eqref{mixL2}. 
	This implication follows from Proposition~\ref{spectral} and Lemma~\ref{lem:mix}.
	\par
	\eqref{mixL2}$\Rightarrow$\eqref{mixL3}. 
	This implication follows from the definitions of ergodicity and mixing.
	\par
	\eqref{mixL3}$\Rightarrow$\eqref{mixL4}. 
	This implication follows from Proposition~\ref{prop:ergL}.
	\par
	\eqref{mixL4}$\Rightarrow$\eqref{mixL1}. 
	Assuming condition~\eqref{mixL4}, 
	we verify that $A$ satisfies spectral criterion. 
	Without loss of generality, we may assume $r(A)=1$. 
	First, 	condition~\eqref{mixL4} implies $\gm(A; 1)=1$ 
	because of the inclusion relation $\cW(A; 1)^{\otimes2}\subset\cW(A^{\otimes2}; 1)$. 
	Second, we prove $\am(A; 1)=1$ by contradiction. 
	Due to condition~\eqref{mixL4}, 
	the linear map $A$ has an eigenvector $x_0\in\cV$ associated with $r(A)=1$. 
	Suppose that a nonzero $x\in\cV$ satisfies $Ax=x+x_0$, the equation 
	\begin{align*}
		A^{\otimes2}(x\otimes x_0 - x_0\otimes x)
		&= (x+x_0)\otimes x_0 - x_0\otimes(x+x_0)\\
		&= x\otimes x_0 - x_0\otimes x
	\end{align*}
	holds. Since $x_0^{\otimes2}$ is an eigenvector of $A^{\otimes2}$ associated with $r(A)^2=1$, 
	condition~\eqref{mixL4} implies that 
	$\alpha x_0^{\otimes2} = x\otimes x_0 - x_0\otimes x$ for some $\alpha\in\mathbb{R}$. 
	Thus $\alpha x_0 = x - \alpha'x_0$ for some $\alpha'\in\mathbb{R}$, 
	and the vector $x$ lies in $\vspan(x_0)$. 
	Hence $x$ is an eigenvector of $A$ associated with $r(A)=1$, 
	but the equation $x=Ax=x+x_0$ is a contradiction. 
	Therefore, $\am(A; 1)=1$.
	\par
	Third, by contradiction, we prove that $\am(A; \lambda)=0$ for any $\lambda\not=1$ whose absolute value is one. 
	Suppose that $\lambda\not=1$ is an eigenvalue of $A$ with $|\lambda|=1$, 
	and let $\cV^{\mathbb{C}}$ be the complexification of $\cV$. 
	Then $A$ can be naturally identified with the linear map on $\cV^{\mathbb{C}}$ 
	to be $A(x_1 + \sqrt{-1}x_2) = Ax_1 + \sqrt{-1}Ax_2$ for all $x_1,x_2\in\cV$. 
	Take an eigenvector $z\in\cV^{\mathbb{C}}$ of $A$ associated with $\lambda$. 
	Then $A$ also has the eigenvector $\overline{z}$ associated with $\overline{\lambda}$, 
	where for any $z\in\mathbb{C}$ the symbol $\overline{z}$ denotes the complex conjugate of $z$. 
	The linear map $A^{\otimes2}$ has the eigenvector $z\otimes\overline{z}$ 
	associated with $\lambda\overline{\lambda}=1$, 
	whereas $A^{\otimes2}$ also has the eigenvector $x_0^{\otimes2}$ associated with $r(A)^2=1$. 
	Hence condition~\eqref{mixL4} implies that $x_0^{\otimes2} = \xi z\otimes\overline{z}$ for some $\xi\in\mathbb{C}$. 
	Thus $x_0 = \xi'z$ for some $\xi'\in\mathbb{C}$. 
	Since $x_0$ and $z$ are eigenvectors of $A$ associated with $r(A)=1$ and $\lambda$ respectively, 
	it contradicts the assumption $\lambda\not=1$. 
	Therefore, $\am(A; \lambda)=0$ for any $\lambda\not=1$ whose absolute value is one.
	\par
	From the above argument, it follows that $A$ satisfies spectral criterion. 
	Therefore, Proposition~\ref{spectral} implies that $A$ is mixing.
\end{proof}

Finally, summarizing Theorems~\ref{thm:adD} and \ref{thm:mixP}, 
we have the following theorem immediately: 

\begin{theorem}\label{thm:mix-ad}
	Let $A_1$ and $A_2$ be dynamical maps. 
	If $A_2$ is not mixing, then the following conditions are equivalent.
	\begin{enumerate}
		\item\label{mix-ad1}
		$A_1$ is mixing.
		\item\label{mix-ad2}
		$A_1^{\otimes2}$ is mixing.
		\item\label{mix-ad3}
		$A_1^{\otimes2}$ is ergodic.
		\item\label{mix-ad4}
		$\gm(A_1^{\otimes2}; 1)=1$.
		\item\label{mix-ad5}
		$A_1\otimes A_2$ is $(\tilde{\cK}, u_1, u_2)$-asymptotically decoupling.
	\end{enumerate}
\end{theorem}

\subsection{Continuous-time evolution}\label{continuous}
On the basis of results in Sections~\ref{ad} and \ref{mix}, 
we prove the continuous version of Theorems~\ref{thm:adD}, \ref{thm:ad'D}, and \ref{thm:mixP}. 
For simplicity, we use dual $u$-preserving in this subsection, 
but one can verify the continuous versions without dual $u$-preserving. 
For an initial state $x\in\cS(\cK,u)$, 
the state at time $t>0$ is denoted by $A^{(t)}x$ 
in the same way as Section~\ref{q-continuous}. 
Moreover, the family $\{A^{(t)}\}_{t>0}$ of linear maps 
is a $C_0$ semigroup: 
\begin{itemize}
	\item
	$A^{(t)}A^{(s)} = A^{(t+s)}$ for all $t,s>0$,
	\item
	$A^{(t)} \to \id_{\cV}$ as $t\downarrow0$.
\end{itemize}
It can be easily checked that 
any $C_0$ semigroup $\{A^{(t)}\}_{t>0}$ is right-continuous everywhere.

In order to define asymptotic decoupling 
for $C_0$ semigroups of (not necessarily $\tilde{\cK}$-positive) linear maps, 
we introduce the set 
\[
\cD_t(\{\tilde{A}^{(t)}\}_{t>0})
\coloneqq \Set{\tilde{x}\in\tilde{\cK}\setminus\{ 0 \} | 
\tilde{A}^{(s)} \tilde{x} \in \tilde{\cK}_{\max}\setminus\{ 0 \}\ (\forall s\ge t)}
\]
for a $C_0$ semigroup $\{\tilde{A}^{(t)}\}_{t>0}$ of linear maps. 
The set $\cD_t(\{\tilde{A}^{(t)}\}_{t>0})$ is the continuous counterpart of $\cD_n(\tilde{A})$. 
The continuous version of asymptotic decoupling is defined 
by replacing $\cD_n(\tilde{A})$, $\tilde{A}^n$, $n\to\infty$, and \eqref{ad-assumption} 
with $\cD_t(\{\tilde{A}^{(t)}\}_{t>0})$, $\tilde{A}^{(t)}$, $t\to\infty$, and 
\begin{equation}
	\bigcup_{t>0} \cD_t(\{\tilde{A}^{(t)}\}_{t>0}) = \tilde{\cK}\setminus\{ 0 \},
	\label{c-ad-assumption}
\end{equation}
respectively. 
The continuous version of mixing is also defined 
by replacing $A^n$ and $n\to\infty$ with $A^{(t)}$ and $t\to\infty$, respectively. 
In this setting, the continuous versions of Theorems~\ref{thm:adD} and \ref{thm:ad'D} are below: 

\begin{theorem}\label{thm:c-adD}
	For any two $C_0$ semigroups $\{A_1^{(t)}\}_{t>0}$ and $\{A_2^{(t)}\}_{t>0}$ of dynamical maps, 
	the following conditions are equivalent.
	\begin{enumerate}
		\item\label{c-adD1}
		$\{A_1^{(t)}\otimes A_2^{(t)}\}_{t>0}$ is asymptotically decoupling.
		\item\label{c-adD2}
		$\{A_1^{(t)}\}_{t>0}$ or $\{A_2^{(t)}\}_{t>0}$ at least one is mixing.
	\end{enumerate}
\end{theorem}

\begin{theorem}\label{thm:c-ad'D}
	For any $C_0$ semigroup $\{\tilde{A}^{(t)}\}_{t>0}$ of dual $u_1\otimes u_2$-preserving maps with \eqref{c-ad-assumption}, 
	the following conditions are equivalent.
	\begin{enumerate}
		\item\label{c-ad'D1}
		$\{\tilde{A}^{(t)}\}_{t>0}$ is asymptotically decoupling.
		\item\label{c-ad'D2}
		There exist a number $i_1\in\{1,2\}$ and a state $x_{0,i_1}\in\cS(\cK_{i_1}, u_{i_1})$ 
		such that any state $\tilde{x}\in\cS(\tilde{\cK}, u_1\otimes u_2)$ satisfies 
		\[
		\tilde{A}^{(t)}\tilde{x}
		= x_{0,i_1}\otimes\pi_{i_2}\tilde{A}^{(t)}\tilde{x} + o(1)
		\quad (t\to\infty),
		\]
		where $i_2$ is an element of $\{1,2\}$ except for $i_1$.
	\end{enumerate}
\end{theorem}

Since Theorem~\ref{thm:c-adD} follows from Theorem~\ref{thm:c-ad'D} immediately, 
we show Theorem~\ref{thm:c-ad'D} below by using Theorem~\ref{thm:ad'D}.

\begin{proof}[Proof of Theorem~$\ref{thm:c-ad'D}$]
	\eqref{c-ad'D2}$\Rightarrow$\eqref{c-ad'D1}. 
	This implication follows from the definition.
	\par
	\eqref{c-ad'D1}$\Rightarrow$\eqref{c-ad'D2}. 
	Assume condition~\eqref{c-ad'D1}. 
	Since $(\tilde{A}^{(1)})^n = \tilde{A}^{(n)}$ for all $n\in\mathbb{N}$, 
	the linear map $\tilde{A}^{(1)}$ is $(\tilde{\cK}, u_1, u_2)$-asymptotically decoupling. 
	Thus Theorem~\ref{thm:ad'D} implies that 
	$\tilde{A}^{(1)}$ satisfies condition~\eqref{ad'D2} in Theorem~\ref{thm:ad'D}. 
	Without loss of generality, we may assume $(i_1,i_2)=(1,2)$. 
	Thus any state $\tilde{x}\in\cS(\tilde{\cK}, u_1\otimes u_2)$ satisfies 
	\begin{gather}
		\tilde{A}^{(n)}\tilde{x}
		= x_{0,1}\otimes\pi_2\tilde{A}^{(n)}\tilde{x} + o(1), \nonumber\\
		\pi_1\tilde{A}^{(n)}\tilde{x} = x_{0,1} + o(1). \label{eq02}
	\end{gather}
	Now, let $A_{0,1}$ be the dynamical map defined as 
	$A_{0,1}x = \braket{u_1,x}x_{0,1}$ for all $x\in\cV_1$. 
	Take an arbitrary state $\tilde{x}\in\cS(\tilde{\cK}, u_1\otimes u_2)$ 
	and a sufficiently large $n_0\in\mathbb{N}$ satisfying 
	$\tilde{A}^{(t)}\tilde{x}\in\cS(\tilde{\cK}_{\max}, u_1\otimes u_2)$ for all $t\ge n_0$. 
	Then, for all $t\ge n_0$, 
	the real numbers $n = n(t) \coloneqq \lfloor{t}\rfloor - n_0$ 
	and $\delta = \delta(t) \coloneqq t - n(t)$ satisfy 
	\begin{align}
		\pi_1\tilde{A}^{(t)}\tilde{x}
		&= \pi_1\tilde{A}^{(n)}(\tilde{A}^{(\delta)}\tilde{x} - \tilde{x})
		+ \pi_1\tilde{A}^{(n)}\tilde{x}\nonumber\\
		&= \pi_1\tilde{A}^{(n)}(\tilde{A}^{(\delta)}\tilde{x} - \tilde{x})
		+ x_{0,1} + o(1), \label{eq03}
	\end{align}
	where the last equality follows from \eqref{eq02}. 
	Moreover, we have 
	\begin{align}
		\| \pi_1\tilde{A}^{(n)}(\tilde{A}^{(\delta)}\tilde{x} - \tilde{x}) \|
		&= \| (\pi_1\tilde{A}^{(n)} - A_{0,1})(\tilde{A}^{(\delta)}\tilde{x} - \tilde{x}) \|\nonumber\\
		&\le \| \pi_1\tilde{A}^{(n)} - A_{0,1} \|
		\cdot \| \tilde{A}^{(\delta)}\tilde{x} - \tilde{x} \|, \label{eq04}
	\end{align}
	where the norm of the first factor of \eqref{eq04} 
	denotes the operator norm based on 
	the two norms on $\cV_1\otimes\cV_2$ and $\cV_1$: 
	for a linear map $B: \cV_1\otimes\cV_2 \to \cV_1$, 
	$\|B\| \coloneqq \sup_{\|\tilde{x}\|\le1} \|B\tilde{x}\|$. 
	The first factor of \eqref{eq04} vanishes as $t\to\infty$. 
	The second factor of \eqref{eq04} is bounded, 
	since the inequality $\delta\ge n_0$ guarantees 
	$\tilde{A}^{(\delta)}\tilde{x}\in\cS(\tilde{\cK}_{\max}, u_1\otimes u_2)$ 
	and the set $\cS(\tilde{\cK}, u_1\otimes u_2)$ is compact. 
	Thus \eqref{eq04} vanishes as $t\to\infty$, 
	whence \eqref{eq03} turns to 
	$\pi_1\tilde{A}^{(t)}\tilde{x} = x_{0,1} + o(1)$. 
	This equation and condition~\eqref{q-c-ad'D1} imply condition~\eqref{q-c-ad'D2}.
\end{proof}

Next, let us consider the continuous version of Theorem~\ref{thm:mixP} 
for $C_0$ semigroups of dynamical maps. 
For this purpose, we need to define ergodicity 
for $C_0$ semigroups of (not necessarily $\cK$-positive) linear maps: 

\begin{definition}[Ergodicity]
	A $C_0$ semigroup $\{A^{(t)}\}_{t>0}$ of dual $u$-preserving maps is ergodic 
	if there exists a nonzero $x_0\in\cV$ such that 
	any state $x\in\cS(\cK,u)$ satisfies 
	\begin{equation*}
		\lim_{t\to\infty} \frac{1}{t}\int_0^t A^{(s)}x\,ds = x_0.
	\end{equation*}
	The vector $x_0$ is called the stationary vector.
\end{definition}

If all $A^{(t)}$ are $(\cK,u)$-dynamical, 
the stationary vector $x_0$ is a state in $\cS(\cK,u)$. 
Now, the continuous version of Theorem~\ref{thm:mixP} is below: 

\begin{theorem}\label{thm:c-mixD}
	For any $C_0$ semigroup $\{A^{(t)}\}_{t>0}$ of dynamical maps, 
	the following conditions are equivalent.
	\begin{enumerate}
		\item\label{c-mixD1}
		$\{A^{(t)}\}_{t>0}$ is mixing.
		\item\label{c-mixD2}
		$\{(A^{(t)})^{\otimes2}\}_{t>0}$ is mixing.
		\item\label{c-mixD3}
		$\{(A^{(t)})^{\otimes2}\}_{t>0}$ is ergodic.
		\item\label{c-mixD4}
		$A^{(\epsilon)}$ is mixing for some $\epsilon>0$.
		\item\label{c-mixD5}
		$(A^{(\epsilon)})^{\otimes2}$ is mixing for some $\epsilon>0$.
		\item\label{c-mixD6}
		$(A^{(\epsilon)})^{\otimes2}$ is ergodic for some $\epsilon>0$.
	\end{enumerate}
\end{theorem}

We show Theorem~\ref{thm:c-mixD} by using Theorem~\ref{thm:mixP}. 
To prove Theorem~\ref{thm:c-mixD}, 
we need the following preliminary lemma: 

\begin{lemma}\label{lem:c-erg}
	For any ergodic $C_0$ semigroup $\{A^{(t)}\}_{t>0}$ of dual $u$-preserving maps, 
	there exists $\epsilon>0$ such that $A^{(\epsilon)}$ is ergodic.
\end{lemma}
\begin{proof}
	Since $\lim_{\epsilon\downarrow0} \epsilon^{-1}\int_0^\epsilon A^{(s)}\,ds = \id_{\cV}$, 
	there exists $\epsilon>0$ such that $\epsilon^{-1}\int_0^\epsilon A^{(s)}\,ds$ is invertible. 
	Take such a positive number $\epsilon$, 
	and put $B = \epsilon^{-1}\int_0^\epsilon A^{(s)}\,ds$. 
	Since $\{A^{(t)}\}_{t>0}$ is ergodic, 
	there exists a nonzero $x_0\in\cV$ such that any state $x\in\cS(\cK,u)$ satisfies 
	\[
	x_0 = \lim_{n\to\infty} \frac{1}{n\epsilon}\int_0^{n\epsilon} A^{(s)}x\,ds
	= \lim_{n\to\infty} \frac{1}{n\epsilon}\sum_{k=0}^{n-1} \int_0^\epsilon A^{(s+k\epsilon)}x\,ds
	= \lim_{n\to\infty} \frac{1}{n}B\sum_{k=0}^{n-1} A^{(k\epsilon)}x.
	\]
	Thus any state $x\in\cS(\cK,u)$ satisfies 
	\[
	B^{-1}x_0 = \lim_{n\to\infty} \frac{1}{n}\sum_{k=0}^{n-1} A^{(k\epsilon)}x,
	\]
	whence $A^{(\epsilon)}$ is ergodic.
\end{proof}

\begin{proof}[Proof of Theorem~$\ref{thm:c-mixD}$]
	\eqref{c-mixD1}$\Rightarrow$\eqref{c-mixD2}$\Rightarrow$\eqref{c-mixD3}. 
	These implications can be proved 
	in the same way as the proof of Theorem~\ref{thm:mixL}.
	\par
	\eqref{c-mixD3}$\Rightarrow$\eqref{c-mixD6}$\Rightarrow$\eqref{c-mixD5}$\Rightarrow$\eqref{c-mixD4}. 
	These implications follow from Lemma~\ref{lem:c-erg} and Theorem~\ref{thm:mixP}.
	\par
	\eqref{c-mixD4}$\Rightarrow$\eqref{c-mixD1}. 
	Assume condition~\eqref{c-mixD4}: 
	there exists a state $x_0\in\cS(\cK,u)$ such that any state $x\in\cS(\cK,u)$ satisfies 
	\[
	\lim_{n\to\infty} A^{(n\epsilon)}x = x_0.
	\]
	Let $A_0$ be the $(\cK,u)$-dynamical map defined as 
	$A_0 x = \braket{u,x}x_0$ for all $x\in\cV$. 
	Also, let $n = n(t) := \lfloor{t/\epsilon}\rfloor$ and $\delta = \delta(t) := t - n(t)\epsilon$ for all $t\ge\epsilon$. 
	Then any state $x\in\cS(\cK,u)$ satisfies 
	\begin{align*}
		A^{(t)}x
		&= A^{(n\epsilon)}(A^{(\delta)}x - x) + A^{(n\epsilon)}x\\
		&= (A^{(n\epsilon)} - A_0)(A^{(\delta)}x - x) + x_0 + o(1).
	\end{align*}
	Since $(A^{(n\epsilon)} - A_0)(A^{(\delta)}x - x) = o(1)$ 
	can be proved in the same way as the proof of $\text{\eqref{eq04}}=o(1)$, 
	we obtain $A^{(t)}x \to x_0$ as $t\to\infty$, 
	which is just condition~\eqref{c-mixD1}.
\end{proof}

\section{Application to Perron-Frobenius theory}\label{PF}
In this section, we apply Theorem~\ref{thm:mixL} to Perron-Frobenius theory 
involving the Perron-Frobenius theorem. 
The Perron-Frobenius theorem is a famous theorem in linear algebra 
and has many applications in applied mathematics. 
For example, assuming \textit{irreducibility} which is a property studied in Perron-Frobenius theory, 
the references \cite{W-H,HY} gave 
large deviation, moderate deviation, and the central limit theorem 
in analyzing hidden Markovian processes. 
Moreover, assuming \textit{primitivity} which is also a property studied in Perron-Frobenius theory, 
they gave a calculus formula of the asymptotic variance in the central limit theorem.
These facts motivate us to study irreducibility and primitivity.

Another motivation to study irreducibility and primitivity is that 
irreducibility in classical probability theory can be easily checked. 
To see this fact, 
we state irreducibility for a classical channel (stochastic matrix) $W$: 
\textit{for all integers $1\le i,j\le d$, 
there exists $n\in\mathbb{N}$ such that $\braket{j |W^n| i}>0$, 
where $\{ \ket{i} \}_{i=1}^d$ is the standard basis of $\mathbb{R}^d$.} 
This condition is called \textit{irreducibility} \cite{W-H}. 
This classical irreducibility can be easily checked 
by investigating the supports of outputs for the finite number of inputs. 
Hence one often uses irreducibility rather than ergodicity, 
although irreducibility is close to ergodicity. 
The closeness is clarified in Definition~\ref{def:irr} and Proposition~\ref{prop:irrP}.

However, if we generalize the classical definition 
to the case with a general proper cone $\cK$, 
it is difficult to check irreducibility in general. 
To explain this difficulty, let us introduce a few terms. 
We say that $x\in\cK$ is \textit{extreme} 
if for all $x_1,x_2\in\cK$ 
the relation $x=x_1+x_2$ implies $x_1,x_2\in\vspan(x)$ \cite[section 2]{Vander}. 
Also, an \textit{extreme ray} of a cone $\cK$ 
is defined as the subset $\set{\alpha x | \alpha\ge0}\subset\cK$ 
with a nonzero extreme vector $x\in\cK$. 
For a $\cK$-positive map $A$, 
it is natural to define irreducibility as follows: 
\textit{for all nonzero extreme $x\in\cK$ and $y\in\cK^\ast$, 
there exists $n\in\mathbb{N}$ such that $\braket{y,A^n x}>0$.} 
If the number of extreme rays of $\cK$ is finite, irreducibility can be easily checked. 
The cone $\cK=[0,\infty)^d$ in classical probability theory has certainly exact $d$ extreme rays. 
However, the number of extreme rays of $\cK$ is not finite in general, 
and hence it is difficult to check irreducibility in general.

In order to apply Theorem~\ref{thm:mixL} to Perron-Frobenius theory, 
we introduce $\cK$-irreducibility and $\cK$-primitivity 
as stronger conditions than ergodicity and mixing, respectively. 
The definition of $\cK$-irreducibility is equivalent to that in the previous paragraph 
(see Proposition~\ref{prop:irrP}). 
From now on, let $\cK$ and $\tilde{\cK}$ be proper cones of $\cV$ and $\cV^{\otimes2}$ 
satisfying $\tilde{\cK}_{\min} \subset \tilde{\cK} \subset \tilde{\cK}_{\max}$. 
Then $\cK$-irreducibility and $\cK$-primitivity are defined as follows: 

\begin{definition}[$\cK$-irreducibility]\label{def:irr}
	A linear map $A$ is $\cK$-irreducible 
	if the following conditions hold.
	\begin{itemize}
		\item
		$A$ is ergodic.
		\item
		The interiors of $\cK$ and $\cK^\ast$ contain 
		a stationary vector $x_0$ and a dual stationary vector $y_0$, respectively.
	\end{itemize}
\end{definition}

\begin{definition}[$\cK$-primitivity]\label{def:prim}
	A linear map $A$ is $\cK$-primitive 
	if the following conditions hold.
	\begin{itemize}
		\item
		$A$ is mixing.
		\item
		The interiors of $\cK$ and $\cK^\ast$ contain 
		a stationary vector $x_0$ and a dual stationary vector $y_0$, respectively.
	\end{itemize}
\end{definition}

In the above definitions, $\cK$-irreducibility and $\cK$-primitivity 
are defined for linear maps, 
but it is usual to define $\cK$-irreducibility and $\cK$-primitivity 
for $\cK$-positive maps. 
If $A$ is $\cK$-positive, 
Definitions~\ref{def:irr} and \ref{def:prim} are the same as usual ones. 
Since $\cK$-irreducibility and $\cK$-primitivity 
are clearly related to ergodicity and mixing, 
Theorem~\ref{thm:mixL} and Lemma~\ref{cone-interior} in Appendix~\ref{appC} 
yield the following corollary immediately: 

\begin{corollary}\label{coro:I-P}
	Let $A$ be a linear map having the eigenvalue $r(A)$. 
	Then $A$ is $\cK$-primitive if and only if 
	$A^{\otimes2}$ is $\tilde{\cK}$-irreducible.
\end{corollary}

\begin{table}[t]
	\centering
	\renewcommand{\arraystretch}{1.2}
	\caption{Relations among ergodicity, mixing, 
	$\cK$-irreducibility, and $\cK$-primitivity. 
	The symbols $A$, $x_0$, and $x$ are a $(\cK,u)$-dynamical map, 
	the stationary state, an initial state, respectively. 
	The limit $n\to\infty$ is taken in the upper table.}
	\label{T02}
	\vskip2ex
	Classification according to definitions
	\par
	\begin{tabular}{|c||c|c|}
		\hline
		   &$x_0\in\cK$&$x_0\in\tint(\cK)$\\
		\hline\hline
		$(1/n)\sum_{k=0}^{n-1} A^k x\to x_0$&Ergodic&Irreducible\\
		\hline
		$A^n x\to x_0$&Mixing&Primitive\\
		\hline
	\end{tabular}
	\vskip2ex
	Classification according to Proposition~\ref{prop:ergD} and Theorem~\ref{thm:mixP}
	\par
	\begin{tabular}{|c||c|c|}
		\hline
		   &$x_0\in\cK$&$x_0\in\tint(\cK)$\\
		\hline\hline
		$\dim\Ker(A - \id_{\cV})=1$&Ergodic&Irreducible\\
		\hline
		$\dim\Ker(A^{\otimes2} - \id_{\cV^{\otimes2}})=1$&Mixing&Primitive\\
		\hline
	\end{tabular}
\end{table}

The upper table in Table~\ref{T02} summarizes the relation 
among ergodicity, mixing, irreducibility, and primitivity 
according to their definitions. 
The lower table in Table~\ref{T02} summarizes the relation 
among them according to Proposition~\ref{prop:ergD} and Theorem~\ref{thm:mixP}. 
Now, we use Corollary~\ref{coro:I-P} to obtain conditions equivalent to $\cK$-primitivity. 
In preceding studies, there are only a few conditions equivalent to $\cK$-primitivity. 
On the other hand, many conditions equivalent to $\cK$-irreducibility are known. 
For instance, the following conditions are known: 

\begin{proposition}\label{prop:irrP}
	For any $\cK$-positive map $A$, 
	the following conditions are equivalent.
	\renewcommand{\theenumi}{I\arabic{enumi}}
	\begin{enumerate}
		\item\label{mean}
		$A$ is $\cK$-irreducible.
		\item\label{exp}
		For any $x\in\cK\setminus\{ 0 \}$, 
		there exists $t>0$ such that $e^{tA}x\in\tint(\cK)$.
		\item\label{improve}
		Any $x\in\cK\setminus\{ 0 \}$ satisfies $(\id_{\cV} + A)^{d-1}x\in\tint(\cK)$.
		\item\label{ineq}
		If $x\in\cK\setminus\{ 0 \}$ and $\alpha\ge0$ satisfy $\alpha x - Ax\in\cK$, 
		then $x\in\tint(\cK)$.
		\item\label{evec1}
		$A$ has no eigenvector on the boundary of $\cK$.
		\item\label{evec2}
		$A$ and $A^\ast$ have eigenvectors $x_0\in\tint(\cK)$ and $y_0\in\tint(\cK^\ast)$ 
		associated with $r(A)>0$, respectively, and $\gm(A; r(A))=1$.
		\item[\eqref{exp}$'$]
		For any nonzero extreme $x\in\cK$ and $y\in\cK^\ast$, 
		there exists $n\in\mathbb{N}$ such that $\braket{y,A^n x}>0$.
	\end{enumerate}
	Here $d$ denotes the dimension of $\cV$.
\end{proposition}

Although most equivalences of conditions in Proposition~\ref{prop:irrP} can be found 
in \cite{Vander,Barker,Barker-Schneider}, for completeness, 
we prove Proposition~\ref{prop:irrP} in Appendix~\ref{appB}. 
Combining Corollary~\ref{coro:I-P} and Proposition~\ref{prop:irrP}, 
we obtain the following conditions equivalent to $\cK$-primitivity: 

\begin{theorem}\label{thm:prim}
	For any $\cK$-positive map $A$ 
	such that $A^{\otimes2}$ is $\tilde{\cK}$-positive, 
	the following conditions are equivalent.
	\renewcommand{\theenumi}{P\arabic{enumi}}
	\begin{enumerate}
		\item
		$A$ is $\cK$-primitive.
		\item\label{p-exp}
		For any $\tilde{x}\in\tilde{\cK}\setminus\{ 0 \}$, 
		there exists $t>0$ such that $e^{tA^{\otimes2}}\tilde{x} \in \tint(\tilde{\cK})$.
		\item
		Any $\tilde{x}\in\tilde{\cK}\setminus\{ 0 \}$ satisfies 
		$(\id_{\cV^{\otimes2}} + A^{\otimes2})^{d^2-1}\tilde{x} \in \tint(\tilde{\cK})$.
		\item
		If $\tilde{x}\in\tilde{\cK}\setminus\{ 0 \}$ and $\alpha\ge0$ satisfy 
		$\alpha\tilde{x} - A^{\otimes2}\tilde{x} \in \tilde{\cK}$, 
		then $\tilde{x}\in\tint(\tilde{\cK})$.
		\item
		$A^{\otimes2}$ has no eigenvectors on the boundary of $\tilde{\cK}$.
		\item
		$A^{\otimes2}$ and $(A^{\otimes2})^\ast$ have eigenvectors 
		in the interiors of $\tilde{\cK}$ and $\tilde{\cK}^\ast$ 
		associated with $r(A)^2>0$, respectively, and $\gm(A^{\otimes2}; r(A)^2)=1$.
		\item[\eqref{p-exp}$'$]
		For any nonzero extreme $\tilde{x}\in\tilde{\cK}$ and $\tilde{y}\in\tilde{\cK}^\ast$, 
		there exists $n\in\mathbb{N}$ such that $\braket{\tilde{y}, (A^{\otimes2})^n \tilde{x}}>0$.
	\end{enumerate}
	Here $d$ denotes the dimension of $\cV$.
\end{theorem}

\begin{figure}[t]
	\centering
	$\bordermatrix{
		 &0  &1&2&3\cr
		0&0  &1&0&1\cr
		1&1/2&0&0&0\cr
		2&1/2&0&0&0\cr
		3&0  &0&1&0
	}$
	\hspace{1em}
	\includegraphics[scale=0.9]{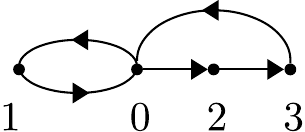}
	\hspace{1em}
	$\bordermatrix{
		 &0&1&2&3\cr
		0&0&1&0&1\cr
		1&1&0&0&0\cr
		2&1&0&0&0\cr
		3&0&0&1&0
	}$
	\caption{The left matrix is a stochastic matrix, 
	the center graph is the directed graph corresponding to the stochastic matrix, and 
	the right matrix is the adjacency matrix of the center graph.}
	\label{F1}
\end{figure}

\begin{remark}
If $A$ is a classical channel (stochastic matrix), 
Corollary~\ref{coro:I-P} can also be derived from an existing result in graph theory. 
In order to explain it, 
we rewrite irreducibility and primitivity for stochastic matrices 
to terms in graph theory. 
A directed graph is called \textit{strongly connected} 
if all two vertices are connected by a path in each direction \cite{McAndrew}. 
A stochastic matrix $W$ is irreducible if and only if 
the directed graph corresponding to $W$ is strongly connected \cite{Meyer}. 
The directed graph corresponding to $W$ has the $d$ vertices $0,\ldots,d-1$, 
and it has directed edges $(j,i)$ if $\braket{j|W|i}>0$. 
Figure~\ref{F1} is an instance of a stochastic matrix, 
the directed graph corresponding to the stochastic matrix, 
and the adjacency matrix of the graph. 
Since the path $0\mapsto1\mapsto0\mapsto2\mapsto3\mapsto0$ passes 
all the vertices and returns to the first vertex, 
the graph in Figure~\ref{F1} is strongly connected. 
Furthermore, the \textit{period} is defined as 
the greatest common divisor of the length of all cycles contained in a strongly connected graph. 
The period of an irreducible stochastic matrix is defined as 
the period of the directed graph corresponding to the stochastic matrix \cite{Kitchens}. 
The reference \cite{Meyer} call it the index of imprimitivity. 
The period of an irreducible stochastic matrix $W$ equals $1$ 
if and only if $W$ is primitive 
(for the only if part, see \cite{Meyer,Kitchens}; the if part can be proved from the definition). 
In graph theory, the \textit{tensor product} of two directed graph $G_1$ and $G_2$ is defined as 
the directed graph corresponding to the adjacency matrix $A_1\otimes A_2$, 
where $A_1$ and $A_2$ are the adjacency matrices of $G_1$ and $G_2$, respectively. 
McAndrew \cite{McAndrew} called it the \textit{product} simply 
and proved the following statement \cite[Theorem~1]{McAndrew}: 
\textit{if two directed graphs $G_1$ and $G_2$ are strongly connected, 
then their tensor product has exact $d_{12}$ strongly connected components, 
where $d_{12}$ is the greatest common divisor of the periods of $G_1$ and $G_2$.} 
If both $G_1$ and $G_2$ are the directed graph corresponding to an irreducible stochastic matrix $W$, 
the above statement implies Corollary~\ref{coro:I-P} 
because $d_{12}$ equals the period of the directed graph corresponding to $W$ in this case.
\end{remark}

Finally, as a simple application of Corollary~\ref{coro:I-P}, 
we show that $\cK$-irreducibility and $\cK$-primitivity 
are preserved under a suitable deformation of a $\cK$-positive map. 
The deformation is used in considering a cumulant generating function \cite{HY}.

\begin{corollary}
	Let $\Omega$ be a non-empty finite set, 
	$A_\omega$ be a $\cK$-positive map, 
	and $a_\omega$ for each $\omega\in\Omega$. 
	If $A \coloneqq \sum_{\omega\in\Omega} A_\omega$ is $\cK$-irreducible, 
	then so is $A_a \coloneqq \sum_\omega a_\omega A_\omega$. 
	If $A \coloneqq \sum_{\omega\in\Omega} A_\omega$ is $\cK$-primitive, 
	then so is $A_a \coloneqq \sum_{\omega\in\Omega} a_\omega A_\omega$.
\end{corollary}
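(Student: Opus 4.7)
The cleanest route is to use the purely combinatorial characterizations of irreducibility and primitivity, namely Condition \ref{exp}$'$ of Theorem \ref{thm5} and the corresponding Condition P-vii$'$ of Corollary \ref{coro4}. These conditions involve only the positivity of expectation values $\braket{y,A^n x}$ on extremal rays, which behaves well under positive reweighting of the summands $A_\omega$.

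First I would observe that $A_a$ is $\mathcal{K}$-positive, since a nonnegative linear combination of $\mathcal{K}$-positive maps preserves $\mathcal{K}$. For the irreducibility claim, fix nonzero extremal vectors $x\in\mathcal{K}$ and $y\in\mathcal{K}^\ast$. Expanding the $n$-th power of a sum yields
\[
\braket{y,A^n x} = \sum_{(\omega_1,\ldots,\omega_n)\in\Omega^n} \braket{y,A_{\omega_1}\cdots A_{\omega_n}x},
\]
and each summand is nonnegative because $A_{\omega_1}\cdots A_{\omega_n}x\in\mathcal{K}$ and $y\in\mathcal{K}^\ast$. By Theorem \ref{thm5} applied to $A$, for some $n$ the left side is strictly positive, so at least one summand, say the one indexed by $(\omega_1^\ast,\ldots,\omega_n^\ast)$, is strictly positive. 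The analogous expansion for $A_a$ gives
\[
\braket{y,A_a^n x} = \sum_{(\omega_1,\ldots,\omega_n)\in\Omega^n} a_{\omega_1}\cdots a_{\omega_n}\braket{y,A_{\omega_1}\cdots A_{\omega_n}x},
\]
all of whose terms are nonnegative, and the $(\omega_1^\ast,\ldots,\omega_n^\ast)$ term is strictly positive since each $a_\omega>0$. Hence $\braket{y,A_a^n x}>0$, so $A_a$ satisfies Condition \ref{exp}$'$ of Theorem \ref{thm5} and is therefore $\mathcal{K}$-irreducible.

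The primitivity case is identical once $A$ is replaced by $A^{\otimes2}$ throughout, using Corollary \ref{coro4} in place of Theorem \ref{thm5}. Writing $A^{\otimes2}=\sum_{\omega,\omega'}A_\omega\otimes A_{\omega'}$ and expanding $(A^{\otimes2})^n$ as a sum over pairs of words, each summand $(A_{\omega_1}\cdots A_{\omega_n})\otimes (A_{\omega'_1}\cdots A_{\omega'_n})$ is $\mathcal{K}^{\otimes2}$-positive, so the same nonnegativity argument transfers the strict positivity of $\braket{y,(A^{\otimes2})^n x}$ to $\braket{y,(A_a^{\otimes2})^n x}$ for any nonzero extremal $x\in\mathcal{K}^{\otimes2}$, $y\in(\mathcal{K}^{\otimes2})^\ast$. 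No real obstacle arises: the whole proof rests on choosing an equivalent condition whose defining inequality splits as a sum of nonnegative terms, so that replacing those terms by positive scalar multiples cannot turn a strict inequality into equality.
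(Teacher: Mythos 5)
Your proof is correct, but it takes a genuinely different route from the paper. You work with Condition \ref{exp}$'$ of Theorem \ref{thm5} (positivity of $\braket{y,A^n x}$ on extremal rays) and exploit the fact that expanding $A^n=\sum_{(\omega_1,\ldots,\omega_n)}A_{\omega_1}\cdots A_{\omega_n}$ splits $\braket{y,A^n x}$ into nonnegative terms, so reweighting by positive scalars preserves strict positivity; the same scheme applied to $A^{\otimes2}=\sum_{\omega,\omega'}A_\omega\otimes A_{\omega'}$ and Condition \ref{p-exp}$'$ of Corollary \ref{coro4} handles primitivity. The paper instead uses Condition \ref{ineq}: if $A_a x\le\alpha x$ for $x\in\mathcal{K}\setminus\{0\}$, then $a_{\min}Ax\le A_a x\le\alpha x$ with $a_{\min}:=\min_\omega a_\omega>0$, so $Ax\le(\alpha/a_{\min})x$ and the irreducibility of $A$ forces $x\in\mathcal{K}^\circ$; the primitive case then follows by applying this to $A^{\otimes2}$ and $A_a^{\otimes2}$, exactly as in your second paragraph. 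The paper's argument is a two-line domination bound and needs no expansion of powers, while yours is more combinatorial and makes transparent \emph{why} the result holds (each ``path'' $\omega_1\cdots\omega_n$ contributing to connectivity survives positive reweighting), which is the natural picture in the classical/graph-theoretic setting. One small point worth making explicit in your version: each summand $\braket{y,A_{\omega_1}\cdots A_{\omega_n}x}$ is nonnegative because a composition of $\mathcal{K}$-positive maps is again $\mathcal{K}$-positive --- you use this but do not state it; with that noted, no gap remains.
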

\begin{proof}
	First, assume that the $\cK$-positive map $A$ is $\cK$-irreducible, 
	and let us use condition~\eqref{ineq}. 
	Let $x\in\cK\setminus\{ 0 \}$, $\alpha\ge0$, 
	$\alpha x - A_a x\in\cK$, 
	and $a_{\min} \coloneqq \min_{\omega\in\Omega} a_\omega > 0$. 
	Then $A_a x - a_{\min} Ax\in\cK$ and thus 
	$\alpha x - a_{\min} Ax = (\alpha x - A_a x) + (A_a x - a_{\min} Ax) \in \cK$. 
	Since $A$ is $\cK$-irreducible, we obtain $x\in\tint(\cK)$. 
	Therefore, $A_a$ is also $\cK$-irreducible.
	\par
	Next, assume that $A$ is $\cK$-primitive. 
	Then Corollary~\ref{coro:I-P} implies that the $\tilde{\cK}_{\min}$-positive map 
	$A^{\otimes2} = \sum_{\omega_1,\omega_2\in\Omega} A_{\omega_1}\otimes A_{\omega_2}$ 
	is $\tilde{\cK}_{\min}$-irreducible. 
	The first assertion in this theorem implies that 
	$A_a^{\otimes2} = \sum_{\omega_1,\omega_2\in\Omega} a_{\omega_1}a_{\omega_2} A_{\omega_1}\otimes A_{\omega_2}$ 
	is also $\tilde{\cK}_{\min}$-irreducible. 
	Thus it follows that $A_a$ is $\cK$-primitive from Corollary~\ref{coro:I-P}.
\end{proof}

\section{Conclusion}\label{conclusion}
We have addressed asymptotic properties for Markovian dynamics, 
and proved that asymptotic decoupling is equivalent to local mixing. 
Since this equivalence shows the importance to investigate criteria of mixing, 
we have given a criterion of mixing 
that is a system of linear equations 
by showing that mixing for dynamics is equivalent to 
ergodicity for the two-fold tensor product of dynamics. 
As a simple application, 
we have stated conditions equivalent to primitivity 
by using conditions equivalent to irreducibility. 
Irreducibility and primitivity guarantee the existence of Perron-Frobenius eigenvalues, 
which characterize the asymptotic performance of observed values in analyzing a hidden Markovian process. 
Although the structure of linear maps with the invariance of a cone is studied in Perron-Frobenius theory, 
our result would be useful in Perron-Frobenius theory because of the generality of our setting.

\ack
The authors are grateful to Ziyu Liu for discussing the continuous case together. 
YY was supported by Japan Society for the Promotion of Science (JSPS) 
Grant-in-Aid for JSPS Fellows No.\ 19J20161. 
MH was supported in part by a JSPS Grant-in-Aids for 
Scientific Research (A) No.\ 17H01280 and for Scientific Research (B) No.\ 16KT0017, 
and Kayamori Foundation of Information Science Advancement.

\appendix
\section*{Appendices}
\section{Proofs of technical lemmas}\label{appA}
In this appendix, we prove Lemmas~\ref{map-compact}, \ref{inv-ran}, and \ref{lem:ad} 
and Proposition~\ref{prop:ergL}. 
Also, we show that any two dynamical maps $A_1$ and $A_2$ satisfy 
$\cD_1(A_1\otimes A_2) = \tilde{\cK}\setminus\{ 0 \}$, 
which implies the assumption \eqref{ad-assumption}.

\begin{lemma}\label{lem:A1}
	Any two dynamical maps $A_1$ and $A_2$ satisfy 
	$\cD_1(A_1\otimes A_2) = \tilde{\cK}\setminus\{ 0 \}$.
\end{lemma}
\begin{proof}
	If we show that $A_1\otimes A_2$ is $\tilde{\cK}_{\max}$-positive, 
	the assertion follows immediately. 
	Instead of the $\tilde{\cK}_{\max}$-positivity of $A_1\otimes A_2$, 
	we may show that $(A_1\otimes A_2)^\ast$ is $\tilde{\cK}_{\max}^\ast$-positive, 
	thanks to Proposition~\ref{dual-pos}. 
	It can be easily checked that the following three facts hold: 
	$A_1^\ast\otimes A_2^\ast$ is $\cK_1^\ast\otimes\cK_2^\ast$-positive; 
	$A_1^\ast\otimes A_2^\ast = (A_1\otimes A_2)^\ast$; 
	$\tilde{\cK}_{\max}^\ast = \cK_1^\ast\otimes\cK_2^\ast$. 
	Therefore, $(A_1\otimes A_2)^\ast$ is $\tilde{\cK}_{\max}^\ast$-positive.
\end{proof}

\begin{lemma}[$(\cK,u)$-norm]\label{lem:norm}
	For $x\in\cV$, the value $\|x\|_{\cK,u}$ is defined as 
	\[
	\|x\|_{\cK,u} = \max_{-u\le_{\cK^\ast} y\le_{\cK^\ast} u} \abs{\braket{y,x}},
	\]
	where the partial order $y\le_{\cK^\ast} y'$ denotes $y'-y\in\cK^\ast$. 
	Then $\|\cdot\|_{\cK,u}$ is a norm on $\cV$, 
	which we call $(\cK,u)$-norm. 
	If $A$ is a $(\cK,u; \cK',u')$-dynamical map, 
	then any $x\in\cV$ satisfies $\|Ax\|_{\cK',u'} \le \|x\|_{\cK,u}$.
\end{lemma}
\begin{proof}
	First, we show that $\|\cdot\|_{\cK,u}$ is a norm. 
	From the definition, the absolute homogeneity and the triangle inequality follow immediately. 
	Hence all we need is to show that the equation $\|x\|_u=0$ implies $x=0$. 
	Thanks to $\|x\|_{\cK,u}=0$, 
	any $-u\le_{\cK^\ast} y\le_{\cK^\ast} u$ satisfies $\braket{y,x}=0$. 
	Take an arbitrary $y\in\cV$. 
	Since the unit effect $u$ is an interior point of $\cK^\ast$, 
	the inequality $-u\le_{\cK^\ast} \epsilon y\le_{\cK^\ast} u$ holds for a sufficiently small $\epsilon>0$. 
	Thus $\epsilon\braket{y,x}=0$ and $\braket{y,x}=0$. 
	Since $y\in\cV$ is arbitrary, we obtain $x=0$. 
	Therefore, $\|\cdot\|_u$ is a norm.
	\par
	Next, in order to show the remaining of Lemma~\ref{lem:norm}, 
	let $A$ be a $(\cK,u; \cK',u')$-dynamical map. 
	The inequality $-u'\le_{(\cK')^\ast} y'\le_{(\cK')^\ast} u'$ and Proposition~\ref{dual-pos} 
	yield 
	\[
	-u = -A^\ast u' \le_{\cK^\ast} A^\ast y \le_{\cK^\ast} A^\ast u' = u.
	\]
	Therefore, 
	\begin{align*}
		\|Ax\|_{\cK',u'} &= \max_{-u'\le_{(\cK')^\ast} y'\le_{(\cK')^\ast} u'} \abs{\braket{y',Ax}}
		= \max_{-u'\le_{(\cK')^\ast} y'\le_{(\cK')^\ast} u'} \abs{\braket{A^\ast y',x}}\\
		&\le \max_{-u\le_{\cK^\ast} y\le_{\cK^\ast} u} \abs{\braket{y,x}} = \|x\|_{\cK,u}.
	\end{align*}
\end{proof}

\begin{proof}[Proof of Lemma~$\ref{map-compact}$]
	From the definition, 
	it follows that the set of all $(\cK,u; \cK',u')$-dynamical maps is closed. 
	Thus, to prove the compactness, 
	all we need is to show that the set of all $(\cK,u; \cK',u')$-dynamical maps is bounded. 
	Let $A$ be a $(\cK,u; \cK',u')$-dynamical map. 
	Consider the operator norm $\|A\|_{\cK,u; \cK',u'}$ 
	based on the norms $\|\cdot\|_{\cK,u}$ and $\|\cdot\|_{\cK',u'}$: 
	\[
	\|A\|_{\cK,u; \cK',u'} \coloneqq \min_{\|x\|_{\cK,u}\le1} \|Ax\|_{\cK',u'}.
	\]
	We show $\|A\|_{\cK,u; \cK',u'} \le 1$. 
	Since $A$ is $(\cK,u; \cK',u')$-dynamical, 
	Proposition~\ref{dual-pos} implies that 
	any $y'\in\cV'$ with $-u'\le_{(\cK')^\ast} y'\le_{(\cK')^\ast} u'$ satisfies 
	\[
	-u = -A^\ast u' \le_{\cK^\ast} A^\ast y' \le_{\cK^\ast} A^\ast u' = u.
	\]
	Therefore, any $x\in\cV$ satisfies 
	\begin{align*}
		\|Ax\|_{\cK',u'}
		&= \max_{-u'\le_{(\cK')^\ast} y'\le_{(\cK')^\ast} u'} \abs{\braket{y',Ax}}
		= \max_{-u'\le_{(\cK')^\ast} y'\le_{(\cK')^\ast} u'} \abs{\braket{A^\ast y',x}}\\
		&\le \max_{-u\le_{\cK^\ast} y\le_{\cK^\ast} u} \abs{\braket{y,x}}
		= \|x\|_{\cK,u},
	\end{align*}
	whence $\|A\|_{\cK,u; \cK',u'}\le1$. 
	Since $A$ is arbitrary, we conclude the boundedness.
\end{proof}

\begin{lemma}[Partial diagonalizability]\label{p-diag}
	Let $A$ be a dynamical map and $\lambda$ be an eigenvalue of $A$ whose absolute value is $r(A)=1$. 
	Then $\am(A; \lambda)=\gm(A; \lambda)$.
\end{lemma}
\begin{proof}
	For $\lambda\in\mathbb{C}$ and $s\in\mathbb{N}$, 
	we denote by $J_s(\lambda)$ the Jordan block 
	\[
	\begin{bmatrix}
		\lambda&1&   &   \\
		   &\lambda&\ddots&   \\
		   &   &\ddots&1\\
		   &   &   &\lambda
	\end{bmatrix}
	\]
	whose size is $s$. 
	Consider the Jordan canonical form of $A$: 
	there exists an invertible linear map $T$ such that 
	\[
	T^{-1}AT = J \coloneqq
	\begin{bmatrix}
		J_{s_1}(\lambda_1)&   &   \\
		   &\ddots&   \\
		   &   &J_{s_r}(\lambda_r)
	\end{bmatrix},
	\]
	where $\lambda_1,\ldots,\lambda_r$ are eigenvalues of $A$. 
	We show the assertion by contradiction. 
	Suppose that there exists an eigenvalue $\lambda$ of $A$ 
	such that $|\lambda|=1$ and $\gm(A; \lambda) < \am(A; \lambda)$. 
	Then $s_{i_0}\ge2$ and $\lambda_{i_0}=\lambda$ for some integer $1\le i_0\le r$. 
	This fact implies that $J_2(\lambda)$ is a submatrix of $J_{s_{i_0}}(\lambda_{i_0})$ 
	composed of the $(1,1)$, $(1,2)$, $(2,1)$, and $(2,2)$ entries. 
	Therefore, we have 
	\begin{gather*}
		\|J_{s_{i_0}}(\lambda_{i_0})^n\|_2^2 \ge \|J_2(\lambda)^n\|_2^2
		= |\lambda^n|^2 + |\lambda^n|^2 + |n\lambda^{n-1}|^2 > n^2,\\
		\begin{split}
			\|T^{-1}\|_2 \|A^n\|_2 \|T\|_2 &\ge \|T^{-1}A^n T\|_2
			= \|J^n\|_2 = \Bigl( \sum_{i=1}^r \|J_{s_i}(\lambda_i)^n\|_2^2 \Bigr)^{1/2}\\
			&\ge \|J_{s_{i_0}}(\lambda_{i_0})^n\|_2 > n \to \infty,
		\end{split}
	\end{gather*}
	where $\|X\|_2 \coloneqq (\Tr X^\ast X)^{1/2}$ for a matrix $X$.  
	However, the second inequality contradicts Lemma~\ref{map-compact}.
\end{proof}

\begin{proof}[Proof of Lemma~$\ref{inv-ran}$]
	Consider the Jordan canonical form $J=T^{-1}AT$ of $A$, 
	where $T$ is an invertible linear map. 
	Thanks to Lemma~\ref{p-diag}, 
	the Jordan canonical form $J$ can be expressed as 
	\[
	J =
	\begin{bmatrix}
		D&   &   &   \\
		   &J_{s_1}(\lambda'_1)&   &   \\
		   &   &\ddots&   \\
		   &   &   &J_{s_r}(\lambda'_{r'})
	\end{bmatrix}
	,\quad D =
	\begin{bmatrix}
		\lambda_1&   &   \\
		   &\ddots&   \\
		   &   &\lambda_r
	\end{bmatrix},
	\]
	where $|\lambda_i|=1$ and $|\lambda'_{i'}|<1$ for all $1\le i\le r$ and $1\le i'\le r'$. 
	Thus 
	\begin{align*}
		A_0 &= \lim_{N\ni n\to\infty} A^n
		= \lim_{N\ni n\to\infty} T J^n T^{-1}\\
		&= \lim_{N\ni n\to\infty} T
		\begin{bmatrix}
			D^n&   &   &   \\
			   &J_{s_1}(\lambda'_1)^n&   &   \\
			   &   &\ddots&   \\
			   &   &   &J_{s_r}(\lambda'_{r'})^n
		\end{bmatrix}
		T^{-1}
		= T
		\begin{bmatrix}
			D_0&   &   &   \\
			   &0&   &   \\
			   &   &\ddots&   \\
			   &   &   &0\\
		\end{bmatrix}
		T^{-1},
	\end{align*}
	where $D_0$ is a diagonal unitary matrix whose size is $r$. 
	Similarly, for some diagonal unitary matrix $D'_0$ whose size is $r$, 
	we have 
	\[
	A'_0 = \lim_{N'\ni n\to\infty} A^n
	= T
	\begin{bmatrix}
		D'_0&   &   &   \\
		   &0&   &   \\
		   &   &\ddots&   \\
		   &   &   &0\\
	\end{bmatrix}
	T^{-1}.
	\]
	Therefore, $\Ran A_0=\Ran A'_0$.
\end{proof}

\begin{proof}[Proof of Lemma~$\ref{lem:ad}$]
	Take an arbitrary state $\tilde{x}\in\cS(\tilde{\cK}, u'_1\otimes u'_2)$, 
	and put 
	\begin{gather*}
		c_n \coloneqq \Bigl( \frac{\braket{u'_1\otimes u'_2, \tilde{A}^n \tilde{x}}}
		{\braket{u_1\otimes u_2, \tilde{A}^n \tilde{x}}} \Bigr)^{1/2},\quad
		x_{n,i} \coloneqq c_n^{-1}\pi_i\Bigl( \frac{\tilde{A}^n \tilde{x}}
		{\braket{u_1\otimes u_2, \tilde{A}^n \tilde{x}}}; u_1,u_2 \Bigr)
		\quad(i=1,2).
	\end{gather*}
	Since $\tilde{A}$ is $(\tilde{\cK}, u_1, u_2)$-asymptotically decoupling, 
	we have 
	\begin{align}
		&\quad c_n^2 \Bigl( \frac{\tilde{A}^n \tilde{x}}{\braket{u'_1\otimes u'_2, \tilde{A}^n \tilde{x}}}
		- x_{n,1}\otimes x_{n,2} \Bigr)\nonumber\\
		&= \frac{\tilde{A}^n \tilde{x}}{\braket{u_1\otimes u_2, \tilde{A}^n \tilde{x}}}
		- \pi_1\Bigl( \frac{\tilde{A}^n \tilde{x}}{\braket{u_1\otimes u_2, \tilde{A}^n \tilde{x}}}; u_1,u_2 \Bigr)\otimes
		\pi_2\Bigl( \frac{\tilde{A}^n \tilde{x}}{\braket{u_1\otimes u_2, \tilde{A}^n \tilde{x}}}; u_1,u_2 \Bigr)\nonumber\\
		&\xrightarrow{n\to\infty} 0. \label{eq05}
	\end{align}
	Since $\tilde{A}^n x/\braket{u_1\otimes u_2,\tilde{A}^n x}$ is a state 
	in $\cS(\tilde{\cK}, u_1\otimes u_2)$ 
	and the set $\cS(\tilde{\cK},u_1\otimes u_2)$ is compact, 
	the sequence $\{c_n\}_n$ of positive numbers is bounded. 
	Thus \eqref{eq05} implies 
	\begin{align}
		\frac{\tilde{A}^n \tilde{x}}{\braket{u'_1\otimes u'_2,\tilde{A}^n \tilde{x}}}
		&= x_{n,1}\otimes x_{n,2} + o(1), \label{eq06}\\
		\braket{u'_1,x_{n,1}}\braket{u'_2,x_{n,2}}
		&= \braket{u'_1\otimes u'_2, x_{n,1}\otimes x_{n,2}}\nonumber\\
		&= \biggl\langle u'_1\otimes u'_2,\ \frac{\tilde{A}^n \tilde{x}}
		{\braket{u'_1\otimes u'_2,\tilde{A}^n \tilde{x}}} \biggr\rangle
		+ o(1)\nonumber\\
		&= 1 + o(1). \label{eq07}
	\end{align}
	Noting that the sequence $\{x_{n,1}\otimes x_{n,2}\}_n$ is bounded, 
	we obtain 
	\begin{align*}
		&\quad \frac{\tilde{A}^n \tilde{x}}{\braket{u'_1\otimes u'_2,\tilde{A}^n \tilde{x}}}
		- \pi_1\Bigl( \frac{\tilde{A}^n \tilde{x}}{\braket{u'_1\otimes u'_2,\tilde{A}^n \tilde{x}}}; u'_1,u'_2 \Bigr)\otimes
		\pi_2\Bigl( \frac{\tilde{A}^n \tilde{x}}{\braket{u'_1\otimes u'_2,\tilde{A}^n \tilde{x}}}; u'_1,u'_2 \Bigr)\\
		&\overset{(a)}{=} x_{n,1}\otimes x_{n,2}
		- \pi_1(x_{n,1}\otimes x_{n,2}; u'_1,u'_2)\otimes\pi_2(x_{n,1}\otimes x_{n,2}; u'_1,u'_2) + o(1)\\
		&= x_{n,1}\otimes x_{n,2}
		- \braket{u'_1,x_{n,1}}\braket{u'_2,x_{n,2}} x_{n,1}\otimes x_{n,2} + o(1)\\
		&= (1 - \braket{u'_1,x_{n,1}}\braket{u'_2,x_{n,2}}) x_{n,1}\otimes x_{n,2} + o(1)
		\overset{(b)}{=} o(1),
	\end{align*}
	where \eqref{eq06} and \eqref{eq07} have been used to obtain $(a)$ and $(b)$, respectively. 
	Therefore, $\tilde{A}$ is $(\tilde{\cK}, u'_1, u'_2)$-asymptotically decoupling.
\end{proof}

\begin{proof}[Proof of Proposition~$\ref{prop:ergL}$]
	Without loss of generality, we may assume that 
	$A$ is a Jordan canonical form and satisfies $r(A)=1$. 
	Then the convergence of $(1/n)\sum_{k=0}^{n-1} A^k$ can be reduced to 
	that of $(1/n)\sum_{k=0}^{n-1} J_s(\lambda)^k$ for each Jordan block $J_s(\lambda)$ of $A$.
	\par
	\eqref{ergL2}$\Rightarrow$\eqref{ergL1}. 
	Let us investigate the convergence of $(1/n)\sum_{k=0}^{n-1} J_s(\lambda)^k$. 
	Condition~\eqref{ergL2} implies that any Jordan block $J_s(\lambda)$ of $A$ satisfies 
	either $|\lambda|<1$ or $(s,|\lambda|)=(1,1)$. 
	First, if $|\lambda|<1$, we have $(1/n)\sum_{k=0}^{n-1} J_s(\lambda)^k \to 0$ as $n\to\infty$. 
	Second, if $(s,|\lambda|)=(1,1)$ and $\lambda\not=1$, we have 
	\[
	\frac{1}{n}\sum_{k=0}^{n-1} J_s(\lambda)^k
	= \Bigl[ \frac{1}{n}\sum_{k=0}^{n-1} \lambda^k \Bigr]
	= \Bigl[ \frac{1}{n}\frac{1-\lambda^n}{1-\lambda} \Bigr] \xrightarrow{n\to\infty} 0.
	\]
	Third, if $(s,\lambda)=(1,1)$, 
	any $n\in\mathbb{N}$ satisfies $(1/n)\sum_{k=0}^{n-1} J_s(\lambda)^k = [1]$. 
	From the above three cases and $\am(A; 1)=1$, it follows that $A$ is ergodic.
	\par
	\eqref{ergL1}$\Rightarrow$\eqref{ergL2}. 
	First, we show that 
	there is not a Jordan block $J_s(\lambda)$ of $A$ satisfying $s\ge2$ and $|\lambda|=1$ 
	by contradiction. 
	Suppose that there is a Jordan block $J_2(\lambda)$ of $A$ satisfying $|\lambda|=1$. 
	If $\lambda\not=1$, the equation 
	\[
	\frac{1}{n}\sum_{k=0}^{n-1} J_2(\lambda)^k 
	= \frac{1}{n}\sum_{k=0}^{n-1}
	\begin{bmatrix}
		\lambda&1\\
		0&\lambda
	\end{bmatrix}
	^k = \frac{1}{n}\sum_{k=0}^{n-1}
	\begin{bmatrix}
		\lambda^k&k\lambda^{k-1}\\
		0&\lambda^k
	\end{bmatrix}
	=
	\begin{bmatrix}
		o(1)&(1/n)\sum_{k=0}^{n-1} k\lambda^{k-1}\\
		0&o(1)
	\end{bmatrix}
	\]
	holds. Using the formula 
	\[
	\sum_{k=0}^{n-1} k\lambda^{k-1} = \frac{1-\lambda^n}{(1-\lambda)^2} - \frac{n\lambda^{n-1}}{1-\lambda},
	\]
	we obtain $(1/n)\sum_{k=0}^{n-1} k\lambda^{k-1} = o(1)+\lambda^{n-1}/(1-\lambda)$. 
	However, it contradicts condition~\eqref{ergL1}, since $\lambda^{n-1}/(1-\lambda)$ does not converge. 
	Similarly, if $\lambda=1$, the equation 
	\[
	\frac{1}{n}\sum_{k=0}^{n-1} J_2(\lambda)^k 
	= \frac{1}{n}\sum_{k=0}^{n-1}
	\begin{bmatrix}
		1&1\\
		0&1
	\end{bmatrix}
	^k = \frac{1}{n}\sum_{k=0}^{n-1}
	\begin{bmatrix}
		1&k\\
		0&1
	\end{bmatrix}
	=
	\begin{bmatrix}
		1&(n-1)/2\\
		0&1
	\end{bmatrix}
	\]
	holds, which contradicts condition~\eqref{ergL1}. 
	Therefore, there is not a Jordan block $J_2(\lambda)$ of $A$ satisfying $|\lambda|=1$. 
	It can be proved that 
	there is not a Jordan block $J_s(\lambda)$ of $A$ satisfying $s\ge2$ and $|\lambda|=1$ 
	in the same way. 
	This result can be rewritten as 
	that $\am(A; \lambda)=\gm(A; \lambda)$ for any $\lambda\in\mathbb{C}$ whose absolute value is one.
	\par
	Next, we show that $A$ has the eigenvalue $1$ by contradiction. 
	Suppose that $A$ did not have the eigenvalue $1$. 
	Since any Jordan block $J_s(\lambda)$ satisfying $|\lambda|=1$ must be $J_1(\lambda)$ 
	due to the result in the previous paragraph, 
	any Jordan block $J_s(\lambda)$ of $A$ satisfies 
	$(1/n)\sum_{k=0}^{n-1} J_s(\lambda)^k \to 0$. 
	Therefore, $(1/n)\sum_{k=0}^{n-1} A^k \to 0$, which contradicts condition~\eqref{ergL1}. 
	This contradiction implies that $A$ has the eigenvalue $1$.
	\par
	Since we have already shown $\am(A; 1)=\gm(A; 1)$, 
	the remaining of condition~\eqref{ergL2} is $\gm(A; 1)=1$. 
	It follows from condition~\eqref{ergL1} immediately.
\end{proof}

\section{Proof of Proposition~\ref{prop:irrP}}\label{appB}
In this appendix, we prove Proposition~\ref{prop:irrP}. 
For $x,x'\in\cV$, 
we define $x\le_{\cK} x'$ and $x<_{\cK}x'$ as $x'-x\in\cK$ and $x'-x\in\tint(\cK)$, respectively.

\begin{lemma}\label{lem:B2}
	If $A$ is an ergodic map with a stationary vector $x_0$ 
	and a dual stationary vector $y_0$, 
	then $A^\ast$ is ergodic and has the stationary vector $y_0$ 
	and the dual stationary vector $x_0$.
\end{lemma}
\begin{proof}
	Without loss of generality, we may assume $r(A)=1$. 
	Then ergodicity for $A$ is equivalent to the statement that any $x,y\in\cV$ satisfy 
	\begin{equation*}
		\lim_{n\to\infty} \frac{1}{n}\sum_{k=0}^{n-1} \braket{y,A^k x} = \braket{y_0,x}\braket{y,x_0}.
	\end{equation*}
	This statement is also equivalent to ergodicity for $A^\ast$.
\end{proof}

\begin{lemma}\label{lem:B3}
	If $A$ is an ergodic map with a stationary vector $x_0$ 
	and a dual stationary vector $y_0$, 
	then $x_0$ and $y_0$ are eigenvectors of $A$ and $A^\ast$ 
	associated with $r(A)$, respectively.
\end{lemma}
\begin{proof}
	Without loss of generality, we may assume $r(A)=1$. 
	Thanks to Proposition~\ref{prop:ergL}, 
	we can take an eigenvector $x_1$ of $A$ associated with $r(A)=1$. 
	The ergodicity of $A$ yields 
	\[
	x_1 = \lim_{n\to\infty} \frac{1}{n}\sum_{k=0}^{n-1} A^k x_1
	= \braket{y_0,x_1}x_0.
	\]
	Since $x_1\not=0$, 
	the stationary vector $x_0$ is an eigenvector of $A$ associated with $r(A)=1$. 
	Also, $y_0$ is an eigenvector of $A^\ast$ associated with $r(A)=1$ 
	thanks to Lemma~\ref{lem:B2}.
\end{proof}

\begin{lemma}\label{lem:B1}
	If a $\cK$-positive map $A$ has 
	an eigenvector $x_0\in\tint(\cK)$ associated with $0$, 
	then $A=0$.
\end{lemma}
\begin{proof}
	Take an arbitrary $y\in\cK^\ast$. 
	Then 
	\[
	\braket{A^\ast y, x_0} = \braket{y, Ax_0} = 0.
	\]
	Also, $x_0\in\tint(\cK)$ and $A^\ast y\in\cK^\ast$ because of Proposition~\ref{dual-pos}. 
	The above relations and Lemma~\ref{lem:appC1} in Appendix~\ref{appC} imply $A^\ast y=0$. 
	Moreover, the relation 
	$\cV = \cK^\ast + (-\cK^\ast) \coloneqq \set{y-y' | y,y'\in\cK^\ast}$ 
	holds thanks to Lemma~\ref{cone-span} in Appendix~\ref{appC}. 
	From this relation and the arbitrariness of $y\in\cK^\ast$, 
	it follows that $A^\ast=0$ and $A=0$.
\end{proof}

\begin{proof}[Proof of Proposition~$\ref{prop:irrP}$]
	\eqref{evec2}$\Rightarrow$\eqref{mean}. 
	Assume condition~\eqref{evec2}. 
	Since $r(A)^{-1}A$ is $(\cK,y_0)$-dynamical, 
	Proposition~\ref{prop:ergD} and condition~\ref{evec2} imply that $A$ is ergodic. 
	From Lemma~\ref{lem:B3} and Proposition~\ref{prop:ergP}, 
	it follows that $x_0$ and $y_0$ are a stationary vector 
	and a dual stationary vector, respectively. 
	Therefore, $A$ is $\cK$-irreducible.
	\par
	\eqref{mean}$\Rightarrow$\eqref{exp}. 
	Assume condition~\eqref{mean}. 
	Let $x_0\in\tint(\cK)$ and $y_0\in\tint(\cK^\ast)$ 
	be a stationary vector and a dual stationary vector, respectively. 
	Take an arbitrary $x\in\cK\setminus\{ 0 \}$. 
	Since $(1/n)\sum_{k=0}^{n-1} (r(A)^{-1}A)^k x$ converges to $\braket{y_0,x}x_0\in\tint(\cK)$, 
	a sufficiently large $n\in\mathbb{N}$ satisfies $\sum_{k=0}^n (r(A)^{-1}A)^k x\in\tint(\cK)$. 
	Thus 
	\[
	e^{r(A)^{-1}A}x \ge_{\cK} \sum_{k=0}^n \frac{r(A)^{-k}}{k!}A^k x
	\ge_{\cK} \frac{1}{n!}\sum_{k=0}^n (r(A)^{-1}A)^k x
	>_{\cK} 0,
	\]
	which is just condition~\eqref{exp}.
	\par
	\eqref{exp}$\Rightarrow$\eqref{improve}. 
	Assume condition~\eqref{exp}. 
	For $x\in\cV$ and $n\in\mathbb{N}$, 
	define $r_n(x)$ to be the dimension of $\vspan(x,Ax,\ldots,A^{n-1}x)$. 
	Let $n_0(x)$ be the minimum number $n\in\mathbb{N}$ satisfying $r_n(x)=r_{n+1}(x)$. 
	Since the dimension of $\cV$ is $d$, 
	any $x\in\cV$ satisfies $1\le n_0(x)\le d$. 
	By induction, it can be easily checked that any integer $n\ge n_0(x)$ satisfies $r_n(x)=r_{n+1}(x)$. 
	In particular, $r_d(x)=r_{d+1}(x)=\cdots$ holds. 
	Now, let us show condition~\eqref{improve} by contradiction. 
	Suppose that the boundary of $\cK$ contained $(I+A)^{d-1}x_1$ 
	for some $x_1\in\cK\setminus\{ 0 \}$. 
	Then there exists $y_1\in\cK^\ast\setminus\{ 0 \}$ such that $\braket{y_1,(I+A)^{d-1}x_1}=0$. 
	From the expansion $(I+A)^{d-1}=\sum_{k=0}^{d-1} \binom{d-1}{k}A^k$, 
	any integer $0\le k\le d-1$ satisfies $\braket{y_1,A^k x_1}=0$. 
	For any integer $k\ge0$, 
	since the equation $r_d(x_1)=r_{d+1}(x_1)=\cdots$ implies 
	$A^k x_1\in\vspan(x_1, Ax_1,\ldots, A^{d-1}x_1)$, 
	we obtain $\braket{y_1,A^k x_1}=0$. 
	Thus any $t>0$ satisfies $\braket{y_1,e^{tA}x_1} = \sum_{k=0}^\infty t^k\braket{y_1,A^k x_1}/k! = 0$, 
	which contradicts condition~\eqref{exp}.
	\par
	\eqref{improve}$\Rightarrow$\eqref{ineq}. 
	Assume condition~\eqref{improve}. 
	Let $x\in\cK\setminus\{ 0 \}$, $\alpha\ge0$, and $Ax\le_{\cK}\alpha x$. 
	Then the inequality $0 <_{\cK} (I+A)^{d-1}x \le_{\cK} (1+\alpha)^{d-1}x$ yields $x>_{\cK}0$.
	\par
	\eqref{ineq}$\Rightarrow$\eqref{evec1}. 
	Assume condition~\eqref{ineq}. 
	Suppose that $Ax=\lambda x$ for some $\lambda\in\mathbb{R}$ and $x\in\cK\setminus\{ 0 \}$. 
	Then $Ax = \lambda x \le_{\cK} \abs{\lambda}x$, 
	and condition~\eqref{ineq} implies $x\in\tint(\cK)$. 
	Therefore, condition~\eqref{evec1} holds.
	\par
	\eqref{evec1}$\Rightarrow$\eqref{evec2}. 
	Assume condition~\eqref{evec1}. 
	Thanks to Proposition~\ref{prop:pos} and Proposition~\ref{dual-pos}, 
	the linear maps $A$ and $A^\ast$ have eigenvectors $x_0\in\cK$ 
	and $y_0\in\cK^\ast$ associated with $r(A)$, respectively. 
	Condition~\eqref{evec1} implies $x_0\in\tint(\cK)$. 
	First, we show $r(A)>0$ by contradiction. 
	Suppose $r(A)=0$, Lemma~\ref{lem:B1} yields $A=0$, 
	which contradicts condition~\eqref{evec1} because of $\dim\cV\ge2$. 
	Therefore, $r(A)>0$.
	\par
	Second, we show $\gm(A; r(A))=1$. 
	Let $x$ be an eigenvector of $A$ associated with $r(A)$. 
	From Lemma~\ref{lem:appC3} in Appendix~\ref{appC}, 
	we can take a real number $\alpha\not=0$ 
	such that the boundary of $\cK$ contains $x' \coloneqq x_0 - \alpha x$. 
	Since $Ax'=r(A)x'$, condition~\eqref{evec1} implies $x'=0$, 
	whence $x\in\vspan(x_0)$. 
	Therefore, $\gm(A; r(A))=1$.
	\par
	Third, we prove $y_0\in\tint(\cK^\ast)$ by contradiction. 
	Suppose that $y_0\in\cK^\ast\setminus\{ 0 \}$ belonged to the boundary of $\cK^\ast$. 
	Defining 
	\[
	\cK_0 = \set{x\in\cK | \braket{y_0,x}=0},\quad 
	\cV_0 = \vspan(\cK_0),
	\]
	we find that $\dim\cV_0\ge1$ and $A\cK_0\subset\cK_0$. 
	Since $\cK_0\subset\cV_0$ is a proper cone, 
	Proposition~\ref{prop:pos} implies that 
	the restriction of $A$ to $\cV_0$ has an eigenvector in $\cK_0$. 
	However, the boundary of $\cK$ contains $\cK_0$, 
	which contradicts condition~\eqref{evec1}. 
	Therefore, $y_0\in\tint(\cK^\ast)$. 
	The results above and in the previous paragraphs are just condition~\eqref{evec2}.
	\par
	\eqref{mean}$\Rightarrow$\eqref{exp}$'$. 
	Assume condition~\eqref{mean}. 
	Let $x$ and $y$ be nonzero extreme vectors of $\cK$ and $\cK^\ast$, respectively. 
	Thanks to condition~\eqref{mean}, 
	the map $A$ has a stationary vector $x_0\in\tint(\cK)$ 
	and a dual stationary vector $y_0\in\tint(\cK^\ast)$. 
	Since the ergodicity of $A$ implies 
	\[
	\lim_{n\to\infty} \frac{1}{n}\sum_{k=1}^{n-1} (r(A)^{-1}A)^k x
	= \lim_{n\to\infty} \frac{1}{n}\sum_{k=0}^{n-1} (r(A)^{-1}A)^k x
	= \braket{y_0,x}x_0 \in \tint(\cK),
	\]
	a sufficiently large $n\in\mathbb{N}$ satisfies 
	$\sum_{k=1}^n (r(A)^{-1}A)^k x\in\tint(\cK)$. 
	Therefore, 
	\[
	0 < \Bigl\langle y,\ \sum_{k=1}^n (r(A)^{-1}A)^k x \Bigr\rangle
	= \sum_{k=1}^n r(A)^{-k}\braket{y, A^k x},
	\]
	whence $\braket{y, A^k x}>0$ for some integer $1\le k\le n$.
	\par
	\eqref{exp}$'$$\Rightarrow$\eqref{exp}. 
	Assume condition~\eqref{exp}$'$. 
	Let $x$ be a nonzero extreme vector of $\cK$. 
	Since for any nonzero extreme $y\in\cK^\ast$ 
	there exists $n\in\mathbb{N}$ such that $\braket{y,A^n x}>0$, 
	any nonzero extreme $y\in\cK^\ast$ satisfies 
	$\braket{y,e^A x} \ge (1/n!)\braket{y,A^n x} > 0$. 
	Since $\cK^\ast$ is generated by extreme vectors of $\cK^\ast$, 
	(which is proved by using the Krein-Milman theorem; for details, see \cite{Vander}), 
	any $y\in\cK^\ast\setminus\{ 0 \}$ satisfies $\braket{y,e^A x}>0$, 
	which implies $e^A x\in\tint(\cK)$ due to Lemma~\ref{lem:appC1} in Appendix~\ref{appC}. 
	Since $\cK$ is generated by extreme vectors of $\cK$, 
	any $x\in\cK\setminus\{ 0 \}$ satisfies $e^A x\in\tint(\cK)$.
\end{proof}

\section{Basic lemmas on convex cones}\label{appC}
For readers' convenience, we prove basic lemmas on convex cones. 
The lemmas are often used in our main discussion implicitly/explicitly. 
Recall that $\cK$ is called a closed convex cone (for short, cone) if $\cK$ is a closed convex cone, 
and $\cK$ is called a proper cone 
if $\cK$ is a cone satisfying $\tint(\cK) \not= \emptyset$ and $\cK\cap(-\cK) = \{ 0 \}$.

\begin{lemma}\label{lem:appC1}
	For any cone $\cK\not=\emptyset$, the following relations hold: 
	\begin{align*}
		\partial(\cK^\ast) &= \set{y\in\cV | \braket{y,x}=0\ (\exists x\in\cK\setminus\{ 0 \})},\\
		\partial\cK &= \set{x\in\cV | \braket{y,x}=0\ (\exists y\in\cK^\ast\setminus\{ 0 \})},\\
		\tint(\cK^\ast) &= \set{y\in\cV | \braket{y,x}>0\ (\forall x\in\cK\setminus\{ 0 \})},\\
		\tint(\cK) &= \set{x\in\cV | \braket{y,x}>0\ (\forall y\in\cK^\ast\setminus\{ 0 \})},
	\end{align*}
	where $\partial\cX$ denotes the boundary of a set $\cX\subset\cV$.
\end{lemma}
\begin{proof}
	Let $\cS$ be the unit sphere of $\cV$. 
	We show 
	\begin{equation}
		\tint(\cK^\ast) = \set{y\in\cV | \braket{y,x}>0\ (\forall x\in\cK\cap\cS)}.
		\label{eq08}
	\end{equation}
	Suppose that some $y\in\cV$ satisfies that $\braket{y,x}>0$ for any $x\in\cK\cap\cS$. 
	Then $y\not=0$. 
	Any $y'\in\cV$ with $\|y'\| < \min_{x\in\cK\cap\cS} \braket{y,x}$ satisfies 
	\[
	\min_{x\in\cK\cap\cS} \braket{y+y',x}
	\ge \min_{x\in\cK\cap\cS} \braket{y,x} - \|y'\| > 0.
	\]
	This inequality implies that $y+y'\in\cK^\ast$ 
	whenever $\|y'\| < \min_{x\in\cK\cap\cS} \braket{y,x}$. 
	Thus $y\in\tint(\cK^\ast)$, 
	which concludes that $\tint(\cK^\ast)$ contains the right-hand side of \eqref{eq08}.
	\par
	Next, we prove the remaining part of \eqref{eq08} by contradiction. 
	Take $y\in\tint(\cK^\ast)$. 
	Suppose that $\braket{y,x}=0$ for some $x\in\cK\cap\cS$. 
	Since $y\in\tint(\cK^\ast)$, 
	we can take a sufficiently small $\epsilon>0$ such that $y - \epsilon x\in\cK^\ast$. 
	However, the inequality $0 \le \braket{y - \epsilon x,x} = -\epsilon < 0$ is a contradiction. 
	Thus the relation \eqref{eq08} holds. 
	Since $\cK$ is a cone, we obtain the third relation.
	\par
	Recall that for any cone $\cK\not=\emptyset$ the relation $\cK^{\ast\ast}=\cK$ holds. 
	Using this relation and the third relation, we obtain the fourth relation. 
	The first and second relations follow from the third and fourth ones, respectively.
\end{proof}

\begin{lemma}\label{cone-span}
	Let $\cK$ be a convex cone having non-empty interior. 
	Then $\cV = \cK + (-\cK) \coloneqq \set{x_1 - x_2 | x_1,x_2\in\cK}$.
\end{lemma}
\begin{proof}
	From the definition, it follows that $\cK + (-\cK)$ is a linear subspace of $\cV$. 
	Since $\cK$ has non-empty interior, so does $\cK + (-\cK)$. 
	Therefore, $\cV = \cK + (-\cK)$.
\end{proof}

\begin{lemma}\label{lem:appC2}
	Let $\cK$ be a proper cone. 
	Then there exists a unit vector $e\in\tint(\cK)$ such that 
	\[
	\cK\setminus\{ 0 \} \subset \set{x\in\cV | \braket{e,x}>0}.
	\]
	In particular, $\tint(\cK)\cap\tint(\cK^\ast)\not=\emptyset$.
\end{lemma}
\begin{proof}
	Let $\cS$ be the unit sphere of $\cV$ and $\cl(\cB)$ be the closed unit ball of $\cV$, 
	and define $\cK_0 = \conv(\cK\cap\cS) \subset\cK\cap\cl(\cB)$, 
	where the symbol $\conv(\cX)$ denotes the convex hull of a subset $\cX\subset\cV$. 
	Then $\cK_0$ is a compact convex set. 
	The zero vector is an extreme point of $\cK$ owing to $\cK\cap(-\cK) = \{ 0 \}$, 
	whence the relation $0\not\in\cK\cap\cS$ implies $0\not\in\cK_0$. 
	Therefore, 
	\begin{align*}
		&\sup_{y\in\tint(\cK)\cap\cl(\cB)} \min_{x\in\cK\cap\cS} \braket{y,x}
		\overset{(a)}{=} \max_{y\in\cK\cap\cl(\cB)} \min_{x\in\cK\cap\cS} \braket{y,x}
		\overset{(b)}{=} \max_{y\in\cK\cap\cl(\cB)} \min_{x\in\cK_0} \braket{y,x}\\
		\ge& \max_{y\in\cK_0} \min_{x\in\cK_0} \braket{y,x}
		\overset{(c)}{=} \min_{x\in\cK_0} \max_{y\in\cK_0} \braket{y,x} \ge \min_{x\in\cK_0} \|x\|^2 > 0.
	\end{align*}
	This inequality implies the assertion.
	\par
	We verify the above inequality. 
	First, the equality $(b)$ holds because the function $\braket{y,x}$ of $x\in\cK_0$ 
	achieves the minimum number when $x$ is an extreme point of $\cK_0$. 
	Second, the equality $(c)$ follows from the minimax theorem. 
	Third, the equality $(a)$ can be shown as follows. 
	Define $f(y) = \min_{x\in\cK\cap\cS} \braket{y,x}$ for $y\in\cV$. 
	Then the function $f$ is concave. 
	Since $\tint(\cK)\not=\emptyset$, 
	we can take $y_0\in\tint(\cK)\cap\cl(\cB)$. 
	Thus any $y_1\in\partial\cK\cap\cl(\cB)$ and $0\le t<1$ satisfy 
	$y_t \coloneqq (1-t)y_0 + ty_1\in\tint(\cK)\cap\cl(\cB)$. 
	From the concavity of $f$, we have 
	\[
	f(y_t) \ge (1-t) f(y_0) + t f(y_1),\quad \liminf_{t\uparrow1} f(y_t) \ge f(y_1).
	\]
	Thus the equality $(a)$ holds. (We have shown the $\ge$ part. The $\le$ part is trivial.)
\end{proof}

\begin{lemma}\label{lem:appC2'}
	Let $\cK\not=\{0\}$ be a cone with $\cK\cap(-\cK) = \{ 0 \}$. 
	Then there exists a unit vector $e\in\cK$ such that 
	\[
	\cK\setminus\{ 0 \} \subset \set{x\in\cV | \braket{e,x}>0}.
	\]
	In particular, $\cK\cap\tint(\cK^\ast)\not=\emptyset$, 
	which also holds in the case with $\cK=\{0\}$.
\end{lemma}
\begin{proof}
	The assertion is more readily shown than Lemma~\ref{lem:appC2}. 
	Indeed, the same inequality 
	\begin{align*}
		&\max_{y\in\cK\cap\cl(\cB)} \min_{x\in\cK\cap\cS} \braket{y,x}
		\overset{(b)}{=} \max_{y\in\cK\cap\cl(\cB)} \min_{x\in\cK_0} \braket{y,x}\\
		\ge& \max_{y\in\cK_0} \min_{x\in\cK_0} \braket{y,x}
		\overset{(c)}{=} \min_{x\in\cK_0} \max_{y\in\cK_0} \braket{y,x} \ge \min_{x\in\cK_0} \|x\|^2 > 0
	\end{align*}
	holds, which implies the assertion. 
	(Note that $\cK_0\not=\emptyset$ due to $\cK\not=\{0\}$.) 
	Finally, we consider the case with $\cK=\{0\}$. 
	Since $\cK^\ast=\cV$, it follows that 
	$\cK\cap\tint(\cK^\ast) = \{0\} \not= \emptyset$.
\end{proof}

\begin{lemma}\label{lem:appC3}
	Let $\cK$ be a cone with $\cK\cap(-\cK) = \{ 0 \}$ 
	and let $x_1,x_2\in\cV$. 
	If any $\alpha\in\mathbb{R}$ satisfies $x_1 + \alpha x_2\in\cK$, 
	then $x_2=0$.
\end{lemma}
\begin{proof}
	Assume that any $\alpha\in\mathbb{R}$ satisfies $x_1 + \alpha x_2\in\cK$. 
	Then any $y\in\cK^\ast$ and $\alpha\in\mathbb{R}$ satisfy 
	\[
	0 \le \braket{y,x_1 + \alpha x_2} = \braket{y,x_1} + \alpha\braket{y,x_2},
	\]
	which implies that $\braket{y,x_2}=0$ for any $y\in\cK^\ast$. 
	Also, the relation $\tint(\cK^\ast)\not=\emptyset$ follows from Lemma~\ref{lem:appC2'}. 
	Thus Lemma~\ref{cone-span} implies that $\braket{y,x_2}=0$ for any $y\in\cV$, 
	which implies $x_2=0$.
\end{proof}

\begin{lemma}\label{state-compact}
	Let $\cK$ be a cone and let $u\in\tint(\cK^\ast)$. 
	Then $\cS(\cK,u)$ is a compact convex set.
\end{lemma}
\begin{proof}
	Since $\cS(\cK,u)$ is a closed convex set, 
	all we need to do is show that $\cS(\cK,u)$ is bounded. 
	We prove it by contradiction. 
	Suppose that some sequence $\{x_n\}_{n=1}^\infty\subset\cS(\cK,u)$ satisfied $\|x_n\|\to\infty$. 
	Then $\cK\cap\cS$ is compact, and thus 
	\[
	0 < \min_{x\in\cK\cap\cS} \braket{u,x} \le \braket{u,x_n} / \|x_n\| = 1/\|x_n\| \to 0,
	\]
	where $\cS$ is the unit sphere of $\cV$. 
	This contradiction concludes that $\cS(\cK,u)$ is bounded.
\end{proof}

\begin{lemma}
	If $\cK$ is a proper cone, 
	then $\cK^\ast$ is also a proper cone.
\end{lemma}
\begin{proof}
	From the definition, it follows that $\cK^\ast$ is a cone. 
	Also, Lemma~\ref{lem:appC2} implies $\tint(\cK^\ast)\not=\emptyset$. 
	Thus we show the remaining part $\cK^\ast\cap(-\cK^\ast) = \{ 0 \}$. 
	Let $y\in\cK^\ast\cap(-\cK^\ast)$. 
	Then $\braket{y,x}=0$ for any $x\in\cK$. 
	Moreover, Lemma~\ref{cone-span} implies that $\braket{y,x}=0$ for any $x\in\cV$, 
	which implies $y=0$.
\end{proof}

\begin{lemma}\label{cone-interior}
	Let $\cK_1$ and $\cK_2$ be proper cones of $\cV_1$ and $\cV_2$ respectively, 
	$\tilde{\cK}$ be a convex cone with 
	$\tilde{\cK}_{\min} \subset \tilde{\cK} \subset \tilde{\cK}_{\max}$, 
	and let $x_i\in\cK_i$ for $i=1,2$. 
	Then $x_1\otimes x_2\in\tint(\tilde{\cK})$ if and only if $x_i\in\tint(\cK_i)$ for each $i=1,2$.
\end{lemma}
\begin{proof}
	Let us show the only if part. 
	Suppose $x_1\otimes x_2\in\tint(\tilde{\cK})$. 
	Take arbitrary $y_1\in\cK_1^\ast\setminus\{ 0 \}$ and 
	$y_2\in\cK_2^\ast\setminus\{ 0 \}$. 
	Then $x_1\otimes x_2\in\tint(\tilde{\cK})\subset\tint(\cl(\tilde{\cK}))$ 
	and $y_1\otimes y_2 \in \tilde{\cK}_{\max}^\ast\setminus\{ 0 \} \subset \tilde{\cK}^\ast\setminus\{ 0 \} = \cl(\tilde{\cK})^\ast\setminus\{ 0 \}$. 
	Thus Lemma~\ref{lem:appC1} implies 
	\[
	0 < \braket{x_1\otimes x_2, y_1\otimes y_2} = \braket{x_1,y_1}\braket{x_2,y_2},
	\]
	whence $\braket{x_1,y_1}>0$ and $\braket{x_2,y_2}>0$. 
	Since $y_1\in\cK_1^\ast\setminus\{ 0 \}$ and 
	$y_2\in\cK_2^\ast\setminus\{ 0 \}$ are arbitrary, 
	Lemma~\ref{lem:appC1} implies $x_1\in\tint(\cK_1)$ and $x_2\in\tint(\cK_2)$.
	\par
	Conversely, letting $x_i\in\tint(\cK_i)$ for $i=1,2$, 
	we can take a basis $\{x_{i,j}\}_{j=1}^{d_i}\subset\cK_i$ of $\cV_i$ 
	with $\sum_{j=1}^{d_i} x_{i,j} = x_i$ for each $i=1,2$. 
	Since the tuple $\{x_{1,j_1}\otimes x_{2,j_2}\}_{j_1,j_2}\subset\tilde{\cK}_{\min}$ is a basis of $\cV_1\otimes\cV_2$, 
	the set 
	\[
	\tilde{\cO} \coloneqq \Set{\sum_{j_1,j_2} \alpha_{j_1,j_2} x_{1,j_1}\otimes x_{2,j_2} | \alpha_{j_1,j_2}>0\ (\forall j_1,j_2)}
	\subset \tilde{\cK}_{\min}\subset\tilde{\cK}
	\]
	is an open set of $\cV_1\otimes\cV_2$. 
	Thus $x_1\otimes x_2\in\tilde{\cO}\subset\tint(\tilde{\cK})$.
\end{proof}

\begin{lemma}
	If $\cK_1$ and $\cK_2$ are proper cones of $\cV_1$ and $\cV_2$ respectively, 
	then $\tilde{\cK}_{\min}$ is also a proper cone of $\cV_1\otimes\cV_2$.
\end{lemma}
\begin{proof}
	From the definition, it follows that $\tilde{\cK}_{\min}$ is a convex cone. 
	Also, Lemma~\ref{cone-interior} implies $\tint(\tilde{\cK}_{\min})\not=\emptyset$ 
	because there exist $u_1\in\tint(\cK_1)$ and $u_2\in\tint(\cK_2)$.
	\par
	In order to prove that $\tilde{\cK}_{\min}$ is closed, 
	we show that $\cS(\tilde{\cK}_{\min}, u_1\otimes u_2)$ is compact. 
	Lemma~\ref{state-compact} yields that $\cS(\cK_1,u_1)$ and $\cS(\cK_2,u_2)$ are compact 
	and so is $\cS(\cK_1,u_1)\times\cS(\cK_2,u_2)$, 
	where $\cX_1\times\cX_2$ denotes the direct product of two sets $\cX_1$ and $\cX_2$. 
	Defining the map $f: \cV_1\times\cV_2\to\cV_1\otimes\cV_2$ 
	to be $(x_1,x_2)\mapsto x_1\otimes x_2$, 
	we find that the map $f$ is continuous, 
	whence $f(\cS(\cK_1,u_1)\times\cS(\cK_2,u_2))$ is compact. 
	Thus $\conv\bigl( f(\cS(\cK_1,u_1)\times\cS(\cK_2,u_2)) \bigr)$ is also compact. 
	Since the relation 
	\[
	\conv\bigl( f(\cS(\cK_1,u_1)\times\cS(\cK_2,u_2)) \bigr) = \cS(\tilde{\cK}_{\min}, u_1\otimes u_2)
	\]
	follows from the definitions, 
	the set $\cS(\tilde{\cK}_{\min}, u_1\otimes u_2)$ is compact.
	\par
	Let $\{\tilde{x}_n\}_{n=1}^\infty\subset\tilde{\cK}_{\min}$ be a convergent sequence 
	whose limit is $\tilde{x}\in\cV_1\otimes\cV_2$. 
	We show $\tilde{x}\in\tilde{\cK}_{\min}$. 
	Without loss of generality, 
	we may assume $\tilde{x}\not=0$ due to $0\in\tilde{\cK}_{\min}$. 
	Since $\tilde{x}\not=0$ and $u_1\otimes u_2\in\tint(\tilde{\cK}_{\min})$, 
	Lemma~\ref{lem:appC1} implies that 
	any sufficiently large $n$ satisfies $\braket{u_1\otimes u_2, \tilde{x}_n} > 0$. 
	Since $\cS(\tilde{\cK}_{\min}, u_1\otimes u_2)$ is compact, 
	there exists a subsequence $\{x_{n(k)}\}_{k=1}^\infty$ of $\{x_n\}_{n=1}^\infty$ such that 
	the sequence $\{\braket{u_1\otimes u_2, \tilde{x}_{n(k)}}^{-1} \tilde{x}_{n(k)}\}_{k=1}^\infty$ converges to 
	a state $\tilde{x}_0\in\cS(\tilde{\cK}_{\min}, u_1\otimes u_2)$. 
	Since $\braket{u_1\otimes u_2, \tilde{x}_{n(k)}}\to\braket{u_1\otimes u_2, \tilde{x}}\ge0$, 
	we have $\tilde{x}_{n(k)}\to \tilde{x} = \braket{u_1\otimes u_2, \tilde{x}}\tilde{x}_0\in\tilde{\cK}_{\min}$. 
	Therefore, $\tilde{\cK}_{\min}$ is closed.
	\par
	Finally, we show $\tilde{\cK}_{\min}\cap(-\tilde{\cK}_{\min}) = \{ 0 \}$. 
	Let $\tilde{x}\in\tilde{\cK}_{\min}\cap(-\tilde{\cK}_{\min})$ and $u_i\in\tint(\cK_i^\ast)$ for $i=1,2$. 
	If $\tilde{x}\not=0$, 
	Lemmas~\ref{lem:appC1} and \ref{cone-interior} imply 
	$0 < \braket{u_1\otimes u_2, \tilde{x}} < 0$, 
	which is a contradiction. 
	Therefore, $\tilde{x}=0$. 
	Since $0\in\tilde{\cK}_{\min}\cap(-\tilde{\cK}_{\min})$, 
	we obtain $\tilde{\cK}_{\min}\cap(-\tilde{\cK}_{\min}) = \{ 0 \}$.
\end{proof}

\section*{References}

\end{document}